\newtheorem{thm}{Theorem}[section]
\newtheorem{dfn}[thm]{Definition}
\newtheorem{lem}[thm]{Lemma}
\newtheorem{rem}[thm]{Remark}
\newtheorem{cor}[thm]{Corollary}
\newtheorem{asm}[thm]{Assumption}
\newtheorem{exm}[thm]{Example}
\numberwithin{equation}{section}
\numberwithin{equation}{section}
\begin{document}
\title{Hedging of Game Options With the Presence of Transaction Costs}
\date{9.3.2012}
\author{
  Yan Dolinsky
  \thanks{
  ETH Zurich, Dept. of Mathematics,
  \texttt{yan.dolinsky@math.ethz.ch}
  }}

\maketitle

\maketitle
\begin{abstract}
We study the problem of super--replication for game
options under proportional transaction costs.
We consider a multidimensional continuous time model,
in which the discounted stock price process satisfies
the conditional full support property.
We show that the super--replication price is the cheapest cost of a trivial
super--replication strategy. This result is an extension of
previous papers (see \cite{B} and \cite{GRS})
which considered only European options.
In these papers the authors showed that with the presence of
proportional transaction costs the super--replication price of a European option
is given in terms of the concave envelope of the payoff function.
In the present work we prove that for game options the super--replication price
is given by a game
variant analog of the standard concave envelope term. The treatment of game
options is more complicated and requires additional tools. We combine
the theory of consistent price systems together with
the theory of extended weak convergence
which was developed in \cite{A}.
The second theory is essential in dealing with hedging which involves
stopping times, like in the case of game options.
\end{abstract}
{\small
\noindent \emph{Keywords:}
game options, optimal stopping, super--replication, transaction costs

\noindent \emph{AMS 2000 Subject Classifications:}
91B28, 60F15, 91A05

\section{Introduction}\label{sec:1}\setcounter{equation}{0}
This paper deals with the super--replication of cash--settled game (Israeli) options
in the presence of proportional transaction costs.
A game contingent claim (GCC) or game option which was introduced in
\cite{Ki1}
is defined
as a contract between the seller and the buyer of the option such
that both have the right to exercise it at any time up to a
maturity date (horizon) $T$. If the buyer exercises the contract
at time $t$ then he receives the payment $Y(t)$, but if the seller
exercises (cancels) the contract before the buyer then the latter
receives $X(t)$. The difference $\Delta(t)=X(t)-Y(t)$ is the penalty
which the seller pays to the buyer for the contract cancellation.
In short, if the seller will exercise at a stopping time
$\sigma\leq{T}$ and the buyer at a stopping time $\tau\leq{T}$
then the former pays to the latter the amount $H(\sigma,\tau)$
where
\begin{equation*}
H(\sigma,\tau)=X(\sigma)\mathbb{I}_{\sigma<\tau}+Y(\tau)\mathbb{I}_{\tau\leq{\sigma}}
\end{equation*}
and we set $\mathbb{I}_{Q}=1$ if an event $Q$ occurs and
$\mathbb{I}_{Q}=0$ if not.

A hedge (for the seller) against a GCC is defined as a pair
$(\pi,\sigma)$ which consists of a self financing strategy $\pi$
and a stopping time $\sigma$ which is the cancellation time
for the seller. A hedge is called perfect if no matter what exercise
time the buyer chooses, the seller can cover his liability to the
buyer (with probability one). Since our contingent claim is
cash--settled, we measure the portfolio value in cash, assuming that there
are no liquidity costs for turning stocks into cash
in the exercise moment of the options.
The option price $V^*$ is defined as
the minimal initial capital which is required for a perfect hedge,
i.e. for any $\Xi>V^*$ there is a perfect hedge with an initial
capital $\Xi$.

We consider a general model of financial market which consists of a
savings account with a stochastic interest rate
and $d$ stocks which are given by a continuous stochastic process.
We assume that the discounted stock price process satisfies the conditional full support property
which was introduced in \cite{GRS}. In general,
the conditional full support property is quite general assumption.
In particular, processes such as Markov diffusions,
solutions of SDEs in
the Brownian setup with path dependent coefficient
(under some regularity conditions)
and fractional Brownian motion
satisfy this assumption (for details see \cite{GRS} and \cite{P}).

Our main result states that the super--replication price is the cheapest cost of a
trivial perfect hedge.
For game options a trivial hedge is a pair which consists of a buy--and--hold
strategy and a hitting time of the stock process into a Borel set. Furthermore, we find explicit
formulas for the cheapest perfect hedge, and characterize the super--replication value
as the game analog of the standard concave envelope which appears in the European options case.
We provide several examples for explicit calculations of the super--replication prices
together with the optimal hedges.

These results are an extension of previous results which were obtained
for European options,
see for example, \cite{B}, \cite{BT}, \cite{GRS},  \cite{LS} and
\cite{SSC}. The most general results were proved in
\cite{B} and \cite{GRS}
where the authors only assumed the conditional
full support property of the (discounted) stock process.
In all of the above papers the authors showed that the
super--replication
price is given in terms of the concave
envelope of the payoff function, and the way to achieve
this price is by using buy--and--hold strategies.

Our main tool is the consistent price systems
approach which was proven to be very powerful for
European options (see \cite{B}, \cite{GRS}).
We derive a family of consistent price systems which
converge weakly to Brownian martingales of general type.
This together with the theory of extended weak convergence
allows us to bound from below the super--replication price by the value
of some robust optimization problem on the Brownian probability space.
The value of this robust optimization problem
leads to the notion of game variant of the concave envelope.
This notion is also appears naturally in the static
super--replication of game options.

The paper is organized as follows. Main results of this paper are
formulated in the next section, where we also give few examples
of applications of these results.
In Section 3 we derive a general family of consistent price systems.
In Section 4 we treat a robust optimization problem and
establish a connection between the value of this problem
and the game analog of the concave envelope.
Furthermore we use a convex analysis to show that the latter
concept characterizes the static super--hedging price.
In Section 5 we use the extended weak convergence theory in order to
prove an essential limit theorem which evolve optimal stopping
and consistent price systems.
In Section 6 we
complete the proof of Theorem \ref{thm2.1} which is the main result of the paper.

\section{Preliminaries and main results}\label{sec:2}\setcounter{equation}{0}
Let $(\Omega,\mathcal{F},P)$ be a complete probability space together with a filtration
$\{\mathcal{F}_t\}_{t=0}^T$ which satisfies the usual conditions
where $T<\infty$ is a fixed maturity date.
Our financial market consists of a bond (savings account) $S_0(t)$
and of $d$ stocks given by a continuous adapted process
$S:=\{S_1(t),...,S_d(t)\}_{t=0}^T$ which takes on values in $\mathbb{R}^d_{++}$.
We will assume that the bond price is of the form
\begin{equation*}\label{2.8}
S_0(t)=\exp\left(\int_{0}^t r(u)du\right)
\end{equation*}
where $\{{r(t)\}}_{t=0}^T$ is a non--negative adapted
process which represents the interest rate of the savings account. Without loss of generality we assume
that $S_0(0)=1$.
As usual
when we deal with hedging it is convenient to work with the discounted terms.
Thus, we introduce the discounted stock price
\begin{equation*}\label{2.9}
\tilde{S}_i(t)=\frac{S_i(t)}{S_0(t)} \ \ \ 1\leq i\leq d, \ \ t\in [0,T].
\end{equation*}
Before introducing the assumption of conditional full support, we review some
concepts. For any $t<T$
consider the space ${C}^{+}([t,T];\mathbb{R}^d)$
of all continuous functions $f:[t,T]\rightarrow\mathbb{R}^d_{++}$
endowed with the uniform topology. As usual, the support of a a probability measure $\mathbb{P}$
on a separable space is
denoted by $supp$ $\mathbb{P}$ and it is defined as the minimal closed set of measure $1$.
We will also use the notation
${C}^{+}_z([t,T];\mathbb{R}^d)$
for the space of all functions $f\in {C}^{+}([t,T];\mathbb{R}^d)$
which start at $z$, namely $f(t)=z$.
\begin{asm}\label{asm2.1}
The process $\tilde{S}$ is satisfies the conditional full support property with respect to the filtration
${\{\mathcal{F}_t\}}_{t=0}^T$.
Namely, for all $t\in [0,T)$
\begin{equation*}\label{2.9+}
 supp \ P(\tilde{S}_{|[t,T]}|\mathcal{F}_t)=C^{+}_{\tilde{S}(t)}([t,T];\mathbb{R}^d) \ \ \mbox{a.s.}
\end{equation*}
where  $P(\tilde{S}_{|[t,T]}|\mathcal{F}_t)$
denotes the $\mathcal{F}_t$--conditional
distribution of the ${C}^{+}([t,T];$
$\mathbb{R}^d)$--valued
random variable $\tilde{S}_{|[t,T]}$.
\end{asm}

Again, let us emphasize that
Markov diffusions,
solutions of SDEs in
the Brownian setup with path dependent coefficient
(under some regularity conditions)
and fractional Brownian motion
satisfy the above assumption (for deteails see \cite{GRS} and \cite{P}).

We also assume that the interest rate process is bounded uniformly by some constant $H$, i.e.
$r\leq H$, $P\otimes\lambda$ a.s, where $\lambda$ is the Lebesgue measure
on $[0,T]$. In Example \ref{exm2.4} we show
that without this assumption our
main results (which are formulated in Theorem \ref{thm2.1})
should not hold true.

Let $F:\mathbb{R}^d_{+}\rightarrow\mathbb{R}_{+}$
be a convex Lipschitz continuous function and let $\Delta>0$ be a constant.
Consider a game option with the discounted payoff
processes
\begin{equation*}\label{2.10}
Y(t)=\frac{1}{S_0(t)}F(S(t)) \ \ \mbox{and}\ \
X(t)=\frac{1}{S_0(t)}(F(S(t))+\Delta),
\ \ t\in [0,T].
\end{equation*}
Set
\begin{equation*}\label{2.11}
H(t,s)=X(t)\mathbb{I}_{t<s}+Y(s)\mathbb{I}_{s\leq t}, \ \ t,s\in [0,T].
\end{equation*}
Observe that $H(\sigma,\tau)$ is the discounted reward
that the buyer receives given that his exercise time is $\tau$ and the seller
cancellation time is $\sigma$.
Namely we consider game options with non path dependent payoffs and with constant penalty
for the seller's exercise. In general,
for the case where the penalty is non constant,
our results (which are formulated in Theorem \ref{thm2.1})
should not hold true. In particular, even the static
super--replication price may depend on the interest rate process.
This is illustrated in Example \ref{exm2.3}.

Next, let $\kappa\in (0,1)$ be a constant.
We assume that an investor must purchase risky assets through his savings account, i.e.
bartering between two risky assets is impossible.
Consider a model in which for any $1\leq i\leq d$,
every purchase or sale of the $i$--th risky asset at moment $t\in [0,T]$ is subject to a proportional
transaction cost of rate $\kappa$. A trading strategy with an initial capital $\Xi$ is a pair $\pi=(\Xi,\gamma)$
where $\gamma:=\{\gamma_i\}_{1\leq i\leq d}$
such that for any $i$,  $\gamma_i=\{\gamma_i(t)\}_{t=0}^T$ is
an adapted process of bounded variation with left continuous paths.
The random variable $\gamma_i(t)$
denotes the number of shares of the $i$--th asset in the portfolio $\pi$
at moment $t$ (before a transfer is made at this time).
This is exactly the reason why we assume that the process $\gamma$ is left continuous.
The discounted portfolio value of a trading strategy $\pi$ is given by
\begin{eqnarray*}\label{2.13}
&V^{\pi}_{\kappa}(t)=\Xi+\langle \gamma(t),\tilde{S}(t)\rangle-\langle \gamma(0),s\rangle+\\
&(1-\kappa) \int_{[0,t]}\langle \tilde{S}(u),d\gamma_{-}(u)\rangle-
(1+k)\int_{[0,t]}\langle \tilde{S}(u),d\gamma_{+}(u)\rangle,
 \ \ t\in [0,T]\nonumber
\end{eqnarray*}
where  $\langle\cdot,\cdot\rangle$ denotes
the standard scalar product of $\mathbb{R}^d$ and all the
integrals in the above formula are Stieltjes integrals.
As usual $\gamma_{+}(t)=(\gamma_{+,1}(t),...,\gamma_{+,d}(t))$
and $\gamma_{-}(t)=(\gamma_{-,1}(t),...,\gamma_{-,d}(t))$,
where
$\gamma_i(t)=\gamma_{+,i}(t)-\gamma_{-,i}(t)$, $i=1,...,d$ is the Jordan decomposition
into a positive variation $\gamma_{+,i}$ and a negative variation $\gamma_{-,i}$.
Observe that we do not assume any semi--martingale structure of the risky assets.
The term $V^\pi_{\kappa}(t)$ is the (discounted) portfolio
value at time $t$, before a transfer is made at this time. Indeed,
$$\Xi-\langle \gamma(0),s\rangle+(1-\kappa) \int_{[0,t]}\langle \tilde{S}(u),d\gamma_{-}(u)
\rangle-(1+k)\int_{[0,t]}\langle \tilde{S}(u),d\gamma_{+}(u)\rangle$$
is the discounted value of the wealth which is held in the savings account, and
$\langle \gamma(t),\tilde{S}(t)\rangle$ is the discounted value of the wealth which is held in stocks.
The set of all self financing strategies with an initial
capital $\Xi$ will be denoted by $\mathcal{A}(\Xi)$.
Let $\mathcal{T}_{[0,T]}$ be the set of all stopping
times which take on values in $[0,T]$.
A pair $(\pi,\sigma)\in\mathcal{A}(\Xi)
\times \mathcal{T}_{[0,T]}$ of a self financing
strategy $\pi=(\Xi,\gamma)$ and a stopping time
$\sigma$ will be called a hedge. A hedge $(\pi,\sigma)$ will be called trivial if it is of the form
\begin{equation*}\label{2.13+}
\gamma\equiv\gamma(0), \ \ \mbox{and} \ \ \sigma=\inf\{t|S(t)\in D\}\wedge{T}
\end{equation*}
where $D\subset\mathbb{R}^d$ is a Borel set.
Namely we do not trade, and
cancel the option at the first time that the stock process vector
enters a Borel set.
A hedge $(\pi,\sigma)$ will be called perfect if
for any $t\in [0,T]$,
$V^\pi_{\kappa}(t)\geq H(\sigma,t)$ a.s.
It is well known (see Theorem 12.16 in \cite{PM})
that a Stieltjes integral of a continuous function
with respect to a left continuous function of bounded variation,
is also left continuous. Thus
the portfolio value process ${\{V^\pi_{\kappa}(t)\}}_{t=0}^T$ is left continuous and so,
a hedge $(\pi,\sigma)$ is perfect iff
$$P\left(\forall{t}\in [0,T], V^\pi_{\kappa}(t)\geq H(\sigma,t)\right).$$
The super--hedging price is defined by
\begin{equation}\label{2.15}
V_{\kappa}(s)=\inf\{\Xi|\exists(\pi,\sigma)\in\mathcal{A}(\Xi)\times \mathcal{T}_{[0,T]} \ \mbox{which} \ \mbox{is}
\ \mbox{a} \ \mbox{perfect} \ \mbox{hedge}\}
\end{equation}
where $s=S(0)$ is the initial stock position.
We set
\begin{equation}\label{2.15+}
\hat{V}(s)=\inf\{\Xi|\exists(\pi,\sigma)\in\mathcal{A}(\Xi)\times \mathcal{T}_{[0,T]} \ \mbox{which} \ \mbox{is}
\ \mbox{a} \ \mbox{perfect} \ \mbox{and} \ \mbox{a} \ \mbox{trivial} \ \mbox{hedge}\}.
\end{equation}
Since for trivial hedges there are no transaction costs,
$\hat{V}(s)$
does not depend on $\kappa$.
Clearly, $V_{\kappa}(s)\leq \hat{V}(s)$ for any $\kappa$. Notice also that from the
Lipschitz property of $F$ we have $\hat{V}<\infty$.

Before we formulate the main result of the paper,
 we will need some preparations.
Let $\mathcal{G}$ be the set of all functions
$f:\mathbb{R}^d_{+}\rightarrow\mathbb{R}_{+}$ which satisfy the following conditions.\\
\\
i. The function $f$ is continuous and for any $x\in\mathbb{R}^d_{+}$, $F(x)\leq f(x)\leq F(x)+\Delta$.\\
ii. Let $D\subset\mathbb{R}^d_{+}$ be a convex set in which $f(x)<F(x)+\Delta$.
Then $f$ is concave in $D$.

Clearly the function $F+\Delta\in\mathcal{G}$ and so
$\mathcal{G}$ is a non empty set.
It turns out (the proof will be given in Lemma \ref{lem2+.3})
that $\mathcal{G}$ has a minimal element $R\in\mathcal{G}$ which can be calculated explicitly, i.e. $R(x)\leq g(x)$
for any $g\in\mathcal{G}$ and $x\in\mathbb{R}^d_{+}$.
The function $R$ is the game variant of the standard concave envelope. Notice that if $\Delta=\infty$
then $R$ equals to the concave envelop of $F$. The function $R$ can be calculated as following.
For any $1\leq i\leq d$, define
\begin{equation*}\label{2.15++}
F_i(x)=F(0,...,0,x,0,....,0), \ \ x\in\mathbb{R}_{+}
\end{equation*}
where the $x$ appears in the $i$--coordinate above.
Introduce the terms
\begin{equation}\label{2.20}
A_i=\inf\bigg\{t>0\bigg{|}\frac{F_i(t)+\Delta-F(0)}{t}\in\partial{F}_i(t)\bigg\} \ \ \mbox{and}
\ \ B_i=\frac{F_i(A_i)+\Delta-F(0)}{A_i}
\end{equation}
where $\partial{F}_i(t)$ is the sub--gradient of the convex function $F_i$ at $t$.
If $A_i=\infty$, i.e. the set in (\ref{2.20}) is empty then
$B_i=\sup\{v\in\bigcup_{t>0}\partial{F}_i(t)\}<\infty$, (recall that $F$ is Lipschitz continuous).
Observe that for the case $A_i<\infty$, the linear function $F(0)+B_ix$ is the (unique)
tangent from the point $(0,F(0))$ to the function $F_i(x)+\Delta$.
Set $B=(B_1,...,B_d)$ and
define the function $R:\mathbb{R}^d_{+}\rightarrow\mathbb{R}_{+}$
by
\begin{eqnarray}\label{2.20+}
&R(0)=F(0) \ \ \mbox{and} \ \
R(x)=\big(F(0)+\langle B,x\rangle\big)\mathbb{I}_{||x||<H(x)}\\
&+\big(F(x)+\Delta\big)\mathbb{I}_{||x||\geq H(x)}, \ \ \mbox{for} \ \ x\neq 0,\nonumber
\end{eqnarray}
where $H(x)=\inf\{t|F(0)+\langle B,t x/{||x||}\rangle\geq\Delta+F(t x/||x||)\}$
and $H(x)=\infty$ if the above set is empty. Observe that $R\leq F+\Delta$.

The following theorem is the main result of the paper
and it says that the super--replication price is the cheapest cost of a trivial
super--replication strategy, which is equal to $R(s)$, the game variant of the concave envelope.
\begin{thm}\label{thm2.1}
For any  $\kappa\in (0,1)$ and $s\in\mathbb{R}^d_{+}$,
\begin{equation}\label{2.20++}
V_{\kappa}(s)=\hat{V}(s)=R(s).
\end{equation}
Furthermore, let $s\in\mathbb{R}^d_{+}$ be an initial stock position. Define
a trivial hedge  $(\pi,\sigma)$ according to the following cases:\\
i. If $R(s)<F(s)+\Delta$,
\begin{eqnarray}\label{2.21}
&\pi=(R(s),\gamma) \ \ \mbox{where} \ \ \gamma\equiv B \  \ t>0,\\
&\mbox{and} \ \ \sigma=\inf\{t| \Delta+F(S(t))\leq F(0)+\langle B,S(t)\rangle\}\wedge {T}.\nonumber
\end{eqnarray}
ii. If $R(s)=F(s)+\Delta$,
\begin{equation}\label{2.22}
\pi=(R(s),\gamma) \ \ \mbox{where} \ \ \gamma\equiv 0, \ \ \mbox{and} \ \ \sigma=0.
\end{equation}

Then $(\pi,\sigma)$ is a perfect hedge with the smallest initial capital.
\end{thm}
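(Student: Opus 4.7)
The plan is to prove \eqref{2.20++} by establishing the chain
$V_{\kappa}(s)\leq\hat V(s)\leq R(s)\leq V_{\kappa}(s)$.
The first inequality is immediate from \eqref{2.15}--\eqref{2.15+}, since a trivial perfect hedge is in particular a perfect hedge. The middle inequality will be proved constructively by exhibiting the two trivial hedges in items (i) and (ii), which simultaneously settles the optimality part of the statement. The third inequality is the hard direction and it is where the machinery developed in Sections~3--5 enters.

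For the upper bound $\hat V(s)\leq R(s)$, case (ii) is essentially immediate: with $R(s)=F(s)+\Delta$, $\gamma\equiv 0$ and $\sigma=0$ one has $V^{\pi}_{\kappa}(t)\equiv R(s)=F(s)+\Delta$, which dominates $H(0,t)$ at every $t$. In case (i) take $\Xi=R(s)=F(0)+\langle B,s\rangle$ and $\gamma\equiv B$; because $\gamma$ is constant, no transaction cost is paid and
\[
V^{\pi}_{\kappa}(t)=F(0)+\langle B,\tilde{S}(t)\rangle.
\]
The definition of $R$ implies $F(0)+\langle B,x\rangle\geq F(x)$ for every $x\in\mathbb{R}^{d}_{+}$ (in the region $\|x\|<H(x)$ because $R=F(0)+\langle B,\cdot\rangle$ there and $R\in\mathcal G$ forces $R\geq F$; in the complementary region the stronger bound $F(0)+\langle B,x\rangle\geq F(x)+\Delta$ is the very definition of $H$). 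Combined with $S_{0}(t)\geq 1$ and $F(0)\geq 0$ this yields $V^{\pi}_{\kappa}(t)\geq Y(t)$ for $t\leq\sigma$, and at $t=\sigma$ the defining inequality $\Delta+F(S(\sigma))\leq F(0)+\langle B,S(\sigma)\rangle$ together with $F(0)(1-1/S_{0}(\sigma))\geq 0$ gives $V^{\pi}_{\kappa}(\sigma)\geq X(\sigma)$. Thus the trivial hedge is perfect with initial capital exactly $R(s)$.

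For the lower bound $V_{\kappa}(s)\geq R(s)$, fix any perfect hedge $(\pi,\sigma)$ with initial capital $\Xi$; the plan is to combine three ingredients matching Sections~3, 5 and~4. First, out of Assumption~2.1 one constructs, as in Section~3, a family of consistent price systems $\{Z^{\varepsilon}\}_{\varepsilon>0}$: pairs consisting of a martingale lying inside the bid--ask spread $[(1-\kappa)\tilde S,(1+\kappa)\tilde S]$ together with an associated probability measure. Standard CPS arguments applied to the self--financing inequality give, for each $\varepsilon$, a dual estimate of the form
\[
\Xi\geq\sup_{\tau\in\mathcal{T}_{[0,T]}} E\bigl[H^{\varepsilon}(\sigma,\tau)\bigr],
\]
where $H^{\varepsilon}$ is the game payoff evaluated along the $\varepsilon$--CPS. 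Second, Section~5 uses Aldous' extended weak convergence \cite{A} to show that, on an enlarged probability space and after passing to a subsequence, $(\tilde S,Z^{\varepsilon})$ converges jointly to a pair of Brownian martingales and, crucially, the seller's cancellation times and the buyer's optimizing stopping times have well--defined weak limits. Third, in the limit the right--hand side of the above display becomes the value of a robust optimization problem over Brownian martingales, which Section~4 identifies with $R(s)$: Lemma~\ref{lem2+.3} produces $R$ as the minimal element of $\mathcal G$, and a static super--hedging argument via convex analysis matches this minimum with the robust optimization value. Combining the three yields $\Xi\geq R(s)$, and infimizing over perfect hedges gives $V_{\kappa}(s)\geq R(s)$.

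The decisive technical obstacle is precisely the passage to the limit in the presence of optimal stopping: the cancellation time $\sigma$ (and, symmetrically, the buyer's exercise time) is not a continuous functional of the price path, so ordinary weak convergence of the CPSs is insufficient to transfer the dual estimate to the Brownian limit. This is exactly where Aldous' extended weak convergence becomes indispensable and it is what distinguishes the game--option setting substantively from the European case treated in \cite{B,GRS}. A secondary, but essential, point is the existence and explicit construction of the minimal element $R$ of $\mathcal G$, handled in Lemma~\ref{lem2+.3}; without it the identification of the robust optimization value with the game concave envelope could not be made.
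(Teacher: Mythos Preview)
Your chain $V_{\kappa}(s)\leq\hat V(s)\leq R(s)\leq V_{\kappa}(s)$ and the verification of the trivial hedges are along the right lines (the paper carries out the upper bound via Lemma~\ref{lem2+.4}, using the auxiliary inequalities \eqref{2+.20} rather than invoking $R\in\mathcal G$ directly, but the content is the same). The problem is in your account of the lower bound.

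You describe the argument as: build CPSs $Z^{\varepsilon}$, obtain for each $\varepsilon$ a dual bound $\Xi\geq\sup_{\tau}E[H^{\varepsilon}(\sigma,\tau)]$, then let $\varepsilon\downarrow 0$ so that $(\tilde S,Z^{\varepsilon})$ converges weakly to a Brownian pair, and use extended weak convergence to carry the seller's cancellation time $\sigma$ and the buyer's optimizer through the limit. That is not what happens, and it is not clear that such a passage could be made rigorous: Aldous' theory gives convergence of optimal stopping \emph{values}, not weak limits of the stopping times themselves, and in any case the paper never takes a weak limit of CPSs.

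The actual logic is inverted. Extended weak convergence is used \emph{inside the construction of a single CPS}, not at the level of the dual estimate. For a fixed Brownian martingale $M\in\Gamma_b(s)$ one builds discrete martingales $M^{(n)}$ with $M^{(n)}\Rightarrow M$; extended weak convergence (Lemma~\ref{lem5.2}) then yields
\[
\min_{\sigma\in\mathcal T_n}E^{\xi}\bigl[F(M^{(n)}(\sigma))+\Delta\mathbb{I}_{\sigma<n}\bigr]\longrightarrow
\inf_{\sigma\in\mathcal T^{W}_{[0,T]}}E^{W}\bigl[F(M(\sigma))+\Delta\mathbb{I}_{\sigma<T}\bigr].
\]
One $M^{(N)}$ is then embedded as a CPS $(Q,\hat S)$ via Lemma~\ref{lem3.0}, producing (Corollary~\ref{cor5.1}) a CPS with the \emph{pre--computed} property
\[
\inf_{\sigma\in\mathcal T_{[0,u]}}E_{Q}\bigl[F(\hat S(\sigma))+\Delta\mathbb{I}_{\sigma<u}\bigr]\geq \mathbb V(s)-\epsilon.
\]
Given this CPS, the seller's $\sigma$ from an arbitrary perfect hedge is handled in one line: the supermartingale estimate of Section~6 gives $\Xi+O(\epsilon)\geq E_{Q}[H(\sigma,\hat T)]$, one rewrites $H(\sigma,\hat T)$ as $F(\hat S(\sigma\wedge\hat T))+\Delta\mathbb{I}_{\sigma<\hat T}$ up to $O(\epsilon)$, and then simply observes that $\sigma\wedge\hat T$ is a competitor in the infimum above. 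No limit of stopping times is ever taken; the buyer's stopping time does not appear at all (one just uses the deterministic time $\hat T$). Your paragraph identifying ``the decisive technical obstacle'' therefore misplaces the difficulty: the issue is not transferring $\sigma$ across a weak limit, but manufacturing a CPS whose own optimal--stopping value already sits near $\mathbb V(s)$.
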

The second case in the above theorem corresponds to a
situation where the initial capital $R(s)$ is equal to
the high payoff $F(s)+\Delta$ and so, the
seller can cancel the contract at the initial moment of time $t=0$
and no actions are needed.
\begin{rem}\label{rem2.1+}
We assume that at the initial moment of time the
investor allowed to have holdings in stocks. Namely,
$\gamma(0)$ is not necessary equal to $0$.
Furthermore, when we calculate the portfolio value at some $t$,
we do not take into account the
liquidation price of the stocks into cash. The reason for this is that although
our options are cash settled, in
real market conditions the stocks can be
delivered physically from the seller to the buyer,
for example, for a Call option's the seller
can give the stock without liquidating it. In the papers \cite{B} and \cite{GRS}
the authors assume that the investor starts with zero stock holdings and must liquidate
his portfolio at the maturity date (the papers deal with European options).
Thus in their setup even trivial strategies are subject to transaction
costs, that is why
the main results in these papers deal only with the asymptotic behaviour (as the rate of the transaction costs
goes to $0$) of the super--replication prices.
\end{rem}
\begin{rem}\label{rem2.2+}
Consider a model with proportional transaction costs of the following
type. The investor is allowed to transfer from the $i$--th
asset to the $j$--th asset for any $0\leq i,j\leq d$,
where the $0$--asset denotes the savings account.
In time $t\in [0,T]$ the above kind of
transfer is subject to proportional transaction costs with a random coefficient
$\lambda^{ij}(t)$. We still allow to the investor
to hold stocks at the initial moment of time $t=0$.
If there exists $\epsilon>0$ such that
$P\big(\min_{0\leq i,j\leq d}\inf_{0\leq t\leq T}\lambda^{ij}(t)>\epsilon\big)=1$,
then there exists $\kappa'\in (0,1)$ such that
$\frac{1-\kappa'}{1+\kappa'}>\frac{1}{1+\lambda^{ij}(t)}$
for any $t\in [0, T]$ and $0\leq i,j\leq d$. Thus
the super--replication price is no less than
$V_{\kappa'}(s)$, and so from Theorem \ref{thm2.1}
we get that the super--replication price in this general setup
is again the cheapest cost of a trivial
super--replication strategy.
\end{rem}
Next, we give three examples for applications of Theorem \ref{thm2.1}.
\begin{exm}[Call option]\label{exm2.1}
Let $K>0$ be a constant and $d=1$ (we have one risky asset which is denoted by $S$).
Consider a game call option
with the discounted payoffs
\begin{equation*}\label{2.23}
Y(t)=\frac{(S(t)-K)^{+}}{S_0(t)} \ \ \mbox{and} \ \ X(t)=Y(t)+\frac{\Delta}{S_0(t)}, \ \ t\in [0,T].
\end{equation*}
Namely, $F(x)=(x-K)^{+}$.
We need to split the analysis into two different cases.\\
i. $\Delta>K$. In this case we have
$A=\infty$ and $B=1$ (recall formulas (\ref{2.20})--(\ref{2.20+})), and so
$R(x)=x$.
From (\ref{2.21}) we get that for any initial stock position $s\in\mathbb{R}_{+}$
the cheapest perfect hedge $(\pi,\sigma)$
is given by
$\pi=(s,\gamma)$ where $\gamma\equiv 1$ and $\sigma=T$.\\
ii. $\Delta\leq K$. In this case we have
$A=K$ and $B=\frac{\Delta}{K}$. Thus (see Fig 1)
\begin{equation*}\label{2.24+}
R(x)=\frac{\Delta x}{K}  \mathbb{I}_{x<K}+(x+\Delta-K) \mathbb{I}_{x\geq K}.
\end{equation*}
Let $s\in\mathbb{R}_{+}$ be an initial position of the stock.
From (\ref{2.21}) we obtain that if $s<K$
then the optimal perfect hedge is given by
\begin{equation*}\label{2.24++}
\pi=\left(\frac{\Delta}{K}s,\gamma\right) \ \ \mbox{where} \ \ \gamma\equiv\frac{\Delta}{K},
\ \ \mbox{and} \ \ \sigma=\inf\{t|S(t)=K\}\wedge {T}.
\end{equation*}
From (\ref{2.22}) we obtain that if $s\geq K$
then the optimal perfect hedge is given by
$(\pi,\sigma)=((s+\Delta-K,0),0)$.
\end{exm}
\begin{figure}
\includegraphics[bb = 41 235 540 635, height = 2.5in]{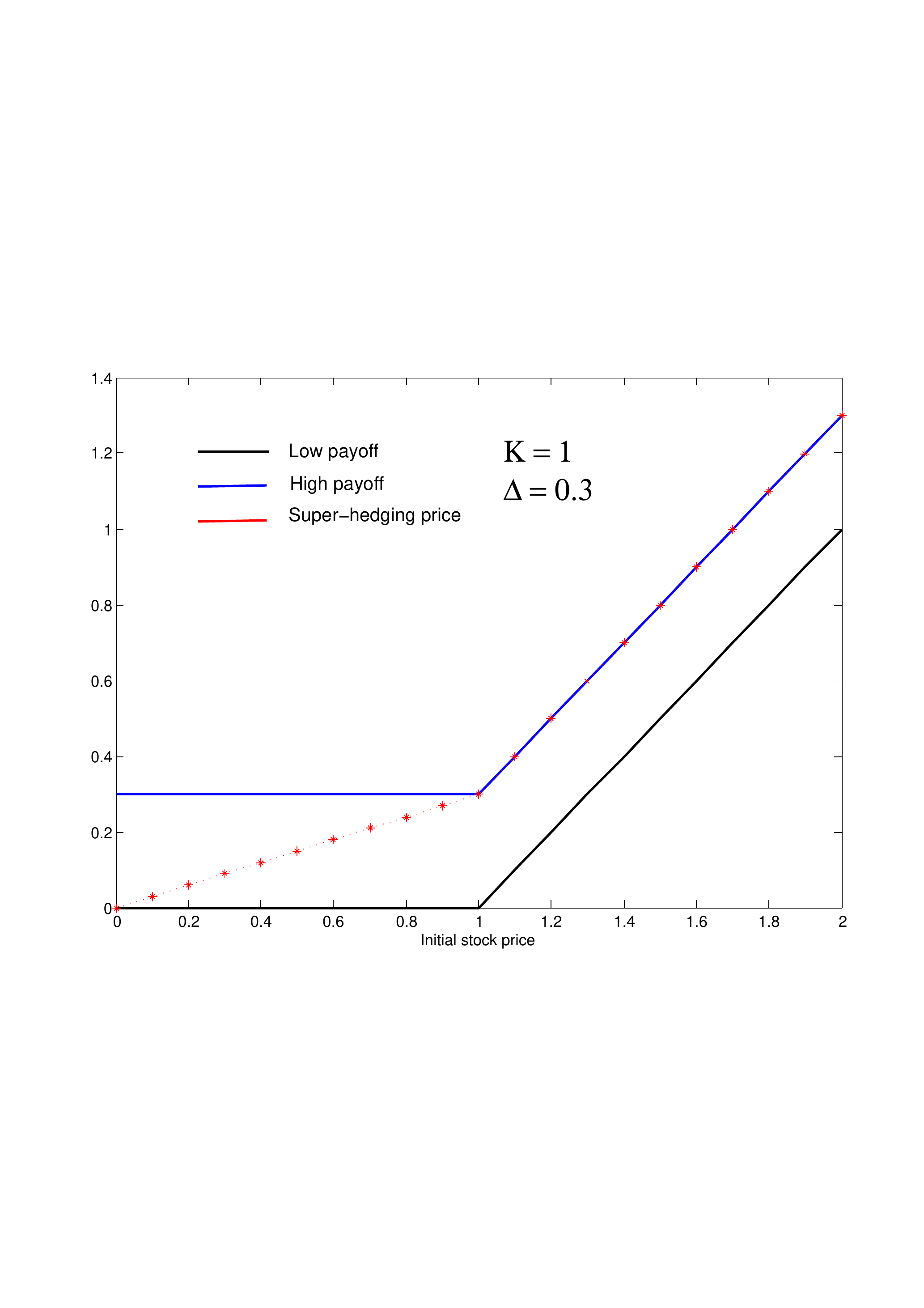}
\caption[]{Call option}
\end{figure}
\begin{exm}[Put option]\label{exm2.2}
Let $K>0$ be a constant and $d=1$.
Consider a game put option
with the discounted payoffs
\begin{equation*}\label{2.33}
Y(t)=\frac{(K-S(t))^{+}}{S_0(t)} \ \ \mbox{and} \ \ X(t)=Y(t)+\frac{\Delta}{S_0(t)}, \ \ t\in [0,T].
\end{equation*}
We consider two different cases.\\
i. $\Delta> K$. In this case we have $A=\infty$
and $B=0$. Thus
$R(x)\equiv K$ and the cheapest perfect hedge is given by
$\pi=(K,0)$ and $\sigma=T$.\\
ii. $\Delta\leq K$. In this case we have
$A=K$ and $B=-\frac{K-\Delta}{K}$.
This together with (\ref{2.20+}) yields (see Fig 2),
\begin{equation*}\label{2.34}
R(x)=\left(K-\frac{K-\Delta}{K}x\right)\mathbb{I}_{x<K}+\Delta\mathbb{I}_{x\geq K}.
\end{equation*}
Let $s\in\mathbb{R}_{+}$ be an initial position of the stock.
From (\ref{2.21}) we obtain that if $s<K$
then the optimal perfect hedge is given by
\begin{equation*}\label{2.35}
\pi=\left(K-\frac{K-\Delta}{K}s,\gamma\right) \ \ \mbox{where} \ \ \gamma\equiv-\frac{K-\Delta}{K} ,\ \
\mbox{and} \ \ \sigma=\inf\{t|S(t)=K\}\wedge {T}.\nonumber
\end{equation*}
From (\ref{2.22}) we obtain that if $s \geq K$
then the optimal perfect hedge is given by
$(\pi,\sigma)=((s+\Delta-K,0),0)$.
\end{exm}

\begin{figure}
\includegraphics[bb = 41 235 540 635, height = 2.5in]{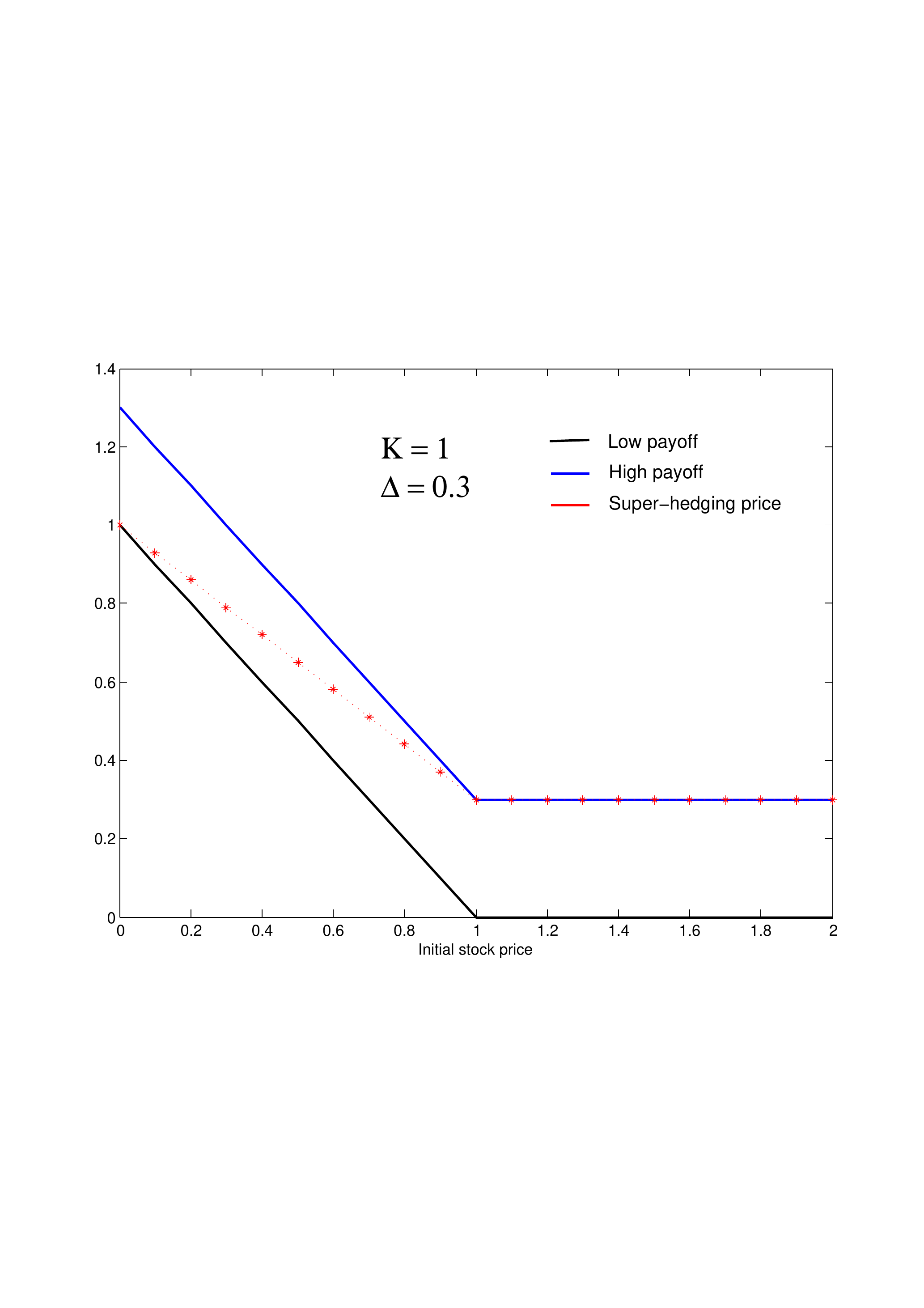}
\caption[]{Put option}
\end{figure}

Let us notice that in the above two examples, when the penalty $\Delta\geq K$,
the investor does not use his right to cancel. Namely in this case
the game option is essential a European/American option and the super--replication price
is given in terms of the standard concave envelope. For the case where $\Delta<K$
the super--replication price for game options is cheaper then in the European/American case
and we arrive at the game variant of the concave envelope.

\begin{exm}[Call--Put options]\label{exm2.2+}
Let $K>0$ be a constant and $d=2$.
Consider a game option with the discounted payoffs
\begin{equation*}\label{2.25}
Y(t)=\frac{(S_1(t)-S_2(t)+K)^{+}}{S_0(t)} \ \ \mbox{and} \ \ X(t)=Y(t)+\frac{\Delta}{S_0(t)}, \ \ t\in [0,T].
\end{equation*}
Namely, $F(x)=(x_1-x_2+K)^{+}$. Then
$F_1(y)=y+K$ and $F_2(y)=(K-y)^{+}$, $y\in\mathbb{R}_{+}$.
Again, we split the analysis into two different cases.\\
i. $\Delta> K$. In this case we have
$A_1=A_2=\infty$, $B_1=1$ and $B_2=0$.
From (\ref{2.20+}) we get
\begin{equation*}\label{2.27}
R(x)=(x_1+K)\mathbb{I}_{x\notin [\Delta-K,\infty)\times [\Delta,\infty) }+\big((x_1-x_2+K)^{+}+\Delta\big)
\mathbb{I}_{x\in [\Delta-K,\infty)\times [\Delta,\infty) }.
\end{equation*}
Let $s\in\mathbb{R}^2_{+}$ be an initial position of the stock.
If $s\notin [\Delta-K,\infty)\times [\Delta,\infty)$ then by (\ref{2.21}) we get
that the cheapest perfect hedge is
\begin{eqnarray*}\label{2.28}
&\pi=(s_1+K,\gamma) \ \ \mbox{where} \ \ \gamma\equiv\big(1,0) ,\\
&\mbox{and} \ \ \sigma=\inf\{t|S(t)\in [\Delta-K,\infty)\times [\Delta,\infty)  \}\wedge {T}.\nonumber
\end{eqnarray*}
If $s\in [\Delta-K,\infty)\times [\Delta,\infty)$ then by (\ref{2.22}) we get
that the optimal perfect hedge is given by
$(\pi,\sigma)=\big(\big(\big(s_1-s_2+K\big)^{+}+\Delta,0\big),0\big)$.\\
ii. $\Delta\leq K$. In this case we have
$A_1=\infty$, $A_2=K$, $B_1=1$ and $B_2=-\frac{K-\Delta}{K}$.
Thus
\begin{equation*}\label{2.29}
R(x)=\bigg(x_1+K-\frac{K-\Delta}{K}x_2\bigg)\mathbb{I}_{x_2<K }+\big(\big(x_1+K-x_2\big)^{+}+\Delta\big)
\mathbb{I}_{x_2\geq K}.
\end{equation*}
The optimal perfect hedge $(\pi,\sigma)$ for an initial position $s\in\mathbb{R}^2_{+}$ is given by
\begin{eqnarray*}\label{2.30}
&\pi=\big(s_1+K-\frac{K-\Delta}{K}s_2,\big(1,-\frac{K-\Delta}{K}\big)\big) \ \ \mbox{if} \ \ s_2<K,\\
&\pi=\big(\big(s_1-s_2+K\big)^{+}+\Delta,0\big) \ \ \mbox{if} \ \ s_2\geq K,\nonumber
\end{eqnarray*}
and $\sigma=\inf\{t|S_2(t)\geq K\}\wedge{T}$.
\end{exm}

Next, we show that for a non--bounded interest rate
process, our results which are given by Theorem \ref{thm2.1}
should not necessarily hold.
\begin{exm}\label{exm2.4}
Assume that our market consists of a bond given by the formula
$S_0(t)=\exp\left(\int_{0}^t|W(u)|du\right)$ and of one stock given by the formula
$S(t)=
\exp\left(\int_{0}^t(|W(u)|+2W(u))du\right)$ where
${\{W(t)\}}_{t=0}^T$ is a standard one--dimensional Brownian motion. Consider a put option
with the discounted payoffs $Y(t)=\frac{(0.5-S(t))^{+}}{S_0(t)}$ and
$X(t)=Y(t)+\frac{1}{S_0(t)}$. From Lemma 4.5 in \cite{GRS} it follows that
the discounted stock price $\frac{S(t)}{S_0(t)}= \exp\left(2\int_{0}^tW(u)du\right)$
satisfies Assumption \ref{asm2.1}. Observe that
$S_0(t)\geq \frac{1}{S(t)}$, and so if $Y(t)>0$ then $S_0(t)\geq 2$. Thus
$Y(t)\leq 1/4$, $t\in [0,T]$. We conclude that the
the super--replication price is not bigger than $1/4$ and so Theorem \ref{thm2.1} does not hold true (compare with Example
\ref{exm2.2}).
\end{exm}

The following example illustrates that for non--constant penalty game options,
Theorem \ref{thm2.1} does not necessarily hold true.
\begin{exm}\label{exm2.3}
Consider a game option with the discounted payoffs
\begin{equation*}\label{2.50}
Y(t)=\frac{1+(S(t)-3)^{+}}{S_0(t)} \ \ \mbox{and} \ \ X(t)=2Y(t), \ \ t\in [0,T].
\end{equation*}
Assume that the initial stock position is $s=4$ and the interest rate
process is a constant $r\geq 0$. We want to calculate the static super--hedging
price $\hat{V}:=\hat{V}(4)$.
Let $(\pi,\sigma)$ be a perfect and a trivial hedge.
If $\sigma\equiv T$ then the cheapest trivial hedge is achieved by
holding one stock at the initial moment of time,
and so the required initial capital is equal $4$.
If $\sigma\equiv 0$ then also the required capital is $4$.
Suppose that we want to find a static
perfect hedge with an initial capital less than $4$.
Clearly, the (constant) number of stocks in the corresponding portfolio
should satisfy $\gamma\geq 1$.
And so, there is no sense for the investor to cancel the contract when the stock
price is bigger than $4$
or smaller than $3$. Thus, without loss of generality
we assume that the investor cancelation time is of the form
$\sigma=\inf\{t|S(t)=\Lambda\}\wedge {T}$ where $3\leq\Lambda<4$.
The discounted stock price satisfies the conditional full support property,
and so we conclude that the super--replication property is given by
\begin{eqnarray*}
&(\hat{V}-4\gamma)\exp(rt)+\gamma s\geq s-2 \  \ \mbox{and}\\
& (\hat{V}-4\gamma)\exp(rt)+\gamma \Lambda\geq 2(\Lambda-2), \ s>\Lambda , \ t\leq T.
\end{eqnarray*}
Since $\gamma\geq 1$ and $\hat{V}- 4\gamma<0$ then the above relations are equivalent to the inequality
$(\hat{V}-4\gamma)\exp(rT)+\gamma \Lambda\geq 2(\Lambda-2)$. Finally, from the inequality
$4\exp(rT)>\Lambda$ we get that
the minimal value of $\hat{V}$ is attained by taking $\gamma=1$ and $\Lambda=3$. Thus
the cheapest cost of a trivial perfect hedge is given by
$\hat{V}=4-\exp(-rT)$. We conclude that the static super--replication price
depends on the interest rate
and so Theorem \ref{thm2.1} does not hold true.
\end{exm}

\section{Consistent price systems}\label{sec3}\setcounter{equation}{0}
It is well known that consistent price systems play a key role
in hedging with transaction costs.
We start with the definition.
\begin{dfn}\label{dfn3.0}
Let $\epsilon>0$. An absolutely continuous $\epsilon$--consistent price
system is a pair $(\hat{S},Q)$ which consists
of a probability measure $Q\ll P$ (absolutely continuous with respect to $P$)
and a $Q$--martingale $\hat{S}=\{\hat{S}_1(t),...,$
$\hat{S}_d(t)\}_{t=0}^T$
with values in $\mathbb{R}^d_{++}$
which satisfies
\begin{equation}\label{3.20}
1-\epsilon<\frac{\tilde S_i(t)}{\hat S_i(t)}<1+\epsilon, \ \ Q \ \mbox{a.s.}  \ \ i=1,...,d
\end{equation}
where recall that $\tilde{S}$ is the discounted stock process.
\end{dfn}
In this section we construct a general family of consistent price systems.
Before we state the main result of this section we need some preparations.
For a subset $D\subset\mathbb{R}^d$ we denote by
$conv(D)$ and $int(D)$, the convex hull of $D$ and the interior of $D$,
respectively.
Next, let $N\in\mathbb{N}$, $\epsilon>0$ and $M={\{(M_1(k),...,M_d(k))\}}_{k=0}^N$ be a finite valued martingale
with values in $\mathbb{R}^d_{++}$. We assume the following conditions.\\
i. $M(0)=\tilde{S}(0)$.\\
ii. For any $k<N$ the conditional distribution of
$M(k+1)-M(k)$ given $M(0),...,M(k)$ is an atomic distribution which contains
exactly $d+1$ points and satisfies
\begin{equation*}\label{3.20+-}
0 \in \ int \ conv \ supp \ \mathbb{P}(M(k+1)-M(k)|M(0),...,M(k)) \ \ \mbox{a.s.}
\end{equation*}
iii. For any $i=1,...,d$ and $k<N$
\begin{equation*}\label{3.20+-+}
(1-\epsilon/3)M_i(k)<M_i(k+1)<(1+\epsilon/3)M_i(k) \ \ \mbox{a.s.}
\end{equation*}
Observe that the martingale $M$ is defined on arbitrary probability space, not necessary
the same probability space on which $S$ is defined.

The following lemma is the main result of this section.
\begin{lem}\label{lem3.0}
There exists an absolutely continuous $\epsilon$--consistent price system
$(\hat{S},Q)$ such that the distribution of
the $Q$--martingale ${\{\hat{S}(kT/N)\}}_{k=0}^N$
equals to the distribution of ${\{M(k)\}}_{k=0}^N$.
Furthermore, for any $k<N$
\begin{eqnarray}\label{3.20+-++}
&Q\left(\hat{S}((k+1)T/N)|\mathcal{F}_{kT/N}\right)=\\
&Q\left(\hat{S}((k+1)T/N)|\hat{S}(0),\hat{S}(T/N),...,\hat{S}(kT/N)\right) \  \ {Q} \ \ \mbox{a.s.}\nonumber
\end{eqnarray}
where recall, ${\{\mathcal{F}_t\}}_{t=0}^T$ is the given filtration.
\end{lem}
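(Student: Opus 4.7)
The plan is to build $\hat S$ at grid times $kT/N$ inductively on the original probability space, using the conditional full support property to carve out disjoint path-space events whose $P$-conditional probability we then reweight via a Radon-Nikodym density to match the prescribed atomic probabilities of $M$. Since $M$ is finite-valued, only countably many histories $(x_0,\dots,x_k)$ arise with positive probability, and on each such history the $d+1$ atoms $m_1,\dots,m_{d+1}$ and weights $p_1,\dots,p_{d+1}$ of the conditional law of $M(k+1)$ are fixed real numbers. Condition (iii) places each $m_j$ inside the box $\prod_i\bigl(x_k^i(1-\epsilon/3),\,x_k^i(1+\epsilon/3)\bigr)$, and distinctness of the $m_j$'s (which follows from ``exactly $d+1$ points'') lets me select pairwise disjoint open cubes $C_j\ni m_j$ inside that box.

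Inductive step: assume $\hat S(jT/N)$ for $j\le k$ and $dQ/dP\bigl|_{\mathcal F_{kT/N}}$ are already defined, with the joint $Q$-law of $\bigl(\hat S(jT/N)\bigr)_{j\le k}$ matching that of $\bigl(M(j)\bigr)_{j\le k}$ and with $\tilde S(jT/N)$ lying in the step-$j$ cube $Q$-almost surely. On the event $E=\{\hat S(jT/N)=x_j,\ j\le k\}$, pick an open path-connected $U\subset\mathbb R^d_{++}$ that contains $\tilde S(kT/N)$ and every $C_j$, and is contained in $\prod_i\bigl(x_k^i(1-\eta),\,x_k^i(1+\eta)\bigr)$ for an $\eta>0$ to be fixed below. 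Set
\begin{equation*}
 A_j=E\cap\bigl\{\tilde S(t)\in U,\ t\in[kT/N,(k+1)T/N]\bigr\}\cap\bigl\{\tilde S((k+1)T/N)\in C_j\bigr\},
\end{equation*}
which is (up to the $\mathcal F_{kT/N}$-measurable set $E$) an open cylinder in the path space of $\tilde S|_{[kT/N,(k+1)T/N]}$; Assumption~\ref{asm2.1} then yields $P(A_j\mid\mathcal F_{kT/N})>0$ on $E$. Extend $dQ/dP$ to $\mathcal F_{(k+1)T/N}$ by multiplying it by $\sum_j\bigl(p_j/P(A_j\mid\mathcal F_{kT/N})\bigr)\mathbb I_{A_j}$ (summed over histories), and set $\hat S((k+1)T/N)=\sum_j m_j\mathbb I_{A_j}$. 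Then $Q(A_j\mid\mathcal F_{kT/N})=p_j$, delivering simultaneously the joint-law match with $M$ through step $k+1$ and the martingale identity $E^Q[\hat S((k+1)T/N)\mid\mathcal F_{kT/N}]=\sum_j m_jp_j=x_k$. Finally, interpolate by $\hat S(t):=E^Q[\hat S((k+1)T/N)\mid\mathcal F_t]$ for $t\in[kT/N,(k+1)T/N]$, producing an $\mathbb R^d_{++}$-valued $Q$-martingale on $[0,T]$.

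It remains to verify (\ref{3.20}) and (\ref{3.20+-++}). For closeness, on $A_j$ the path $\tilde S_i(t)$ lies in $\bigl(x_k^i(1-\eta),\,x_k^i(1+\eta)\bigr)$ for every $t\in[kT/N,(k+1)T/N]$, while $\hat S_i(t)$, being a convex combination of the $m_j^i$'s, lies in $\bigl(x_k^i(1-\epsilon/3),\,x_k^i(1+\epsilon/3)\bigr)$; picking $\eta$ so that $(1-\eta)/(1+\epsilon/3)>1-\epsilon$ and $(1+\eta)/(1-\epsilon/3)<1+\epsilon$ delivers the bound (\ref{3.20}) $Q$-a.s. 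Identity (\ref{3.20+-++}) is immediate from $Q(A_j\mid\mathcal F_{kT/N})=p_j$ being a function of $(\hat S(0),\dots,\hat S(kT/N))$ alone. The main obstacle is the inductive positivity $P(A_j\mid\mathcal F_{kT/N})>0$: this is precisely where the \emph{conditional} (as opposed to merely marginal) form of full support is needed, combined with the observation that the relevant open cylinder of paths is non-empty because $U$ is open and path-connected and contains both $\tilde S(kT/N)$ and a point of $C_j$.
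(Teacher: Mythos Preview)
Your argument is correct and follows a genuinely different route from the paper. Both proofs carve out disjoint path-space tubes via conditional full support and then change measure so that the tubes carry the right mass, but the paper does not reweight directly. Instead it applies an Esscher transform (following \cite{GRS}): on each history it finds a tilt vector $\theta(k)$ so that the exponentially reweighted $P$-conditional law of the increment $\tilde M(k+1)-\tilde M(k)$ has mean zero, and uses $\exp(\langle\theta(k),\cdot\rangle)/Z(k)$ as the density update. The resulting $Q$-conditional law is then \emph{some} mean-zero probability on the $d+1$ atoms, and this is where condition~(ii) becomes essential for the paper: with exactly $d+1$ atoms in $\mathbb R^d$ whose convex hull has $0$ in its interior, the mean-zero probability on those atoms is unique (barycentric coordinates of the origin), so the Esscher tilt is forced to recover precisely the conditional law of $M$. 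Your explicit reweighting $p_j/P(A_j\mid\mathcal F_{kT/N})$ sidesteps this entirely: the law of $M$ is matched by construction, the martingale identity $\sum_j m_j p_j=x_k$ comes for free because $M$ is already a martingale, and condition~(ii) is never invoked. Your approach is therefore both more elementary and strictly more general---it would work for any finite atomic conditional law, not just one supported on $d+1$ points in general position---while the paper's Esscher route has the virtue of fitting into the consistent-price-system machinery already developed in \cite{GRS}.
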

\begin{proof}
Fix $\delta>0$. We will assume that $\delta$ is sufficiently small such that
the (Euclidean) distance between
any two different values of the random variables $M(0),...,M(N)$
is bigger than $2\delta(N+1)$. We denote the Euclidean norm by $||\cdot||$.
For $x,y\in\mathbb{R}^d_{++}$
and $k<N$ define the event
\begin{eqnarray*}\label{3.20+}
&A_{x,y,k}=\{||\tilde{S}(t)-(k+1-Nt/T)x-(Nt/T-k)y||<(k+1)\delta, \\
& t\in [kT/N,(k+1)T/N]\}.
\end{eqnarray*}
Denote by $\Psi(k,z_1,...,z_k)\subset\mathbb{R}^{d}_{++}$ the
(finite) set of all possible values of the random variable
$M(k+1)-M(k)$ given that $M(i)=z_i$, $i=1,...,k$.
Define on the probability space $(\Omega,\mathcal{F},P)$
the stochastic process ${\{\tilde{M}(k)\}}_{k=0}^N$
and the events $\mathcal{C}_0,...,\mathcal{C}_N$ by the following recursive relations,
$\tilde{M}(0)=\tilde{S}(0)$, $\mathcal{C}_0=\emptyset$, and for $k<N$
\begin{eqnarray}\label{3.21}
&\tilde{M}(k+1)=\tilde{M}(k)+\sum_{v\in\Psi(k,\tilde{M}(1),...,\tilde{M}(k))}(1-\mathbb{I}_{\mathcal{C}_k})
\mathbb{I}_{\{A_{\tilde{M}(k),v,k}\}}v\\
&\mbox{and} \ \ \mathcal{C}(k+1)=\{\tilde{M}(k+1)=\tilde{M}(k)\}.\nonumber
\end{eqnarray}
Observe that the sets $A_{\tilde{M}(k),v,k}$, $v\in\Psi(k,\tilde{M}(1),...,\tilde{M}(k))$
are disjoint.
Clearly $\mathcal{C}_0\subset\mathcal{C}_1\subset...\subset\mathcal{C}_N$.
 Fix $k\leq N$.
Let $\mathcal{C}\in \mathcal{F}_{kT/N}$ be an event such that
$\mathcal{C}\subset \Omega\setminus \mathcal{C}_k$ and the random variables
$\tilde{M}(0),...,\tilde{M}(k)$ are constants on $\mathcal{C}$.
From the definitions it follows that on the event $\Omega\setminus \mathcal{C}_k$,
we have $||\tilde{M}(k)-\tilde{S}(kT/N)||\leq k \delta$.
This together
with the conditional full support property of $\tilde{S}$
yields that on the event $\mathcal{C}$,
for any $v\in \Psi(k,\tilde{M}(1),...,\tilde{M}(k))$
\begin{eqnarray}\label{3.22}
&P(\tilde{M}(k+1)-\tilde{M}(k)=v|\mathcal{F}_{kT/N})\geq\\
&P\big(||\tilde{S}(t)-(k+1-Nt/T)\tilde{S}(kT/N)-\nonumber\\
&(Nt/T-k)v||<\delta|\mathcal{F}_{kT/N}\big)>0  \ \ \mbox{a.s.}\nonumber
\end{eqnarray}
From the second condition on the martingale $M$ we obtain (by induction)
that $0\in \  int \ conv \ \Psi(k,\tilde{M}(1),...,\tilde{M}(k))$ a.s. on the event
$\Omega\setminus \mathcal{C}_k$. Thus we conclude
that
\begin{equation}\label{3.23}
0\in \ int \ conv \ supp \ P(\tilde M(k+1)-\tilde{M}(k)|\mathcal{F}_{kT/N}) \ \mbox{for} \ \mbox{almost} \ \mbox{all} \
\omega\in \Omega\setminus \mathcal{C}_k.
\end{equation}
By using the Esscher transform in the same way that it was used in
Lemma 3.1 of \cite{GRS} we get that there exists a $\mathcal{F}_{kT/N}$ measurable
random vector $\theta(k)$ such that
\begin{equation}\label{3.23+}
E_P\left(\exp(\langle\theta(k),\tilde{M}(k+1)-\tilde{M}(k)\rangle)(\tilde{M}(k+1)-\tilde{M}(k))|\mathcal{F}_{kT/N}\right)=0
\end{equation}
where $E_P$ denotes the expectation with respect to $P$.
Set
\begin{equation}\label{3.23++}
Z(k)=E_P\left(\exp(\langle\theta(k),\tilde{M}(k+1)-\tilde{M}(k)\rangle)\mathbb{I}_{\Omega\setminus\mathcal{C}_{k+1}}|\mathcal{F}_{kT/N}\right).
\end{equation}
From (\ref{3.22}) it follows that on the event $\Omega\setminus \mathcal{C}_k$, $Z(k)>0$ a.s.
Define the stochastic process
\begin{equation*}\label{3.23+++}
\mathcal{H}(k)=\mathbb{I}_{\Omega\setminus \mathcal{C}_k}\prod_{i=0}^{k-1} \frac{\exp(\langle\theta(i),\tilde{M}(i+1)-\tilde{M}(i)\rangle)}
{Z(i)}, \ \ k=1,...,N.
\end{equation*}
Observe that
${\{\mathcal{H}(k)\}}_{k=1}^N$ is a martingale with $E_P \mathcal{H}(N)=1$. Thus there exists a
probability measure $Q\ll P$ which satisfies
\begin{equation*}\label{3.23++++}
\frac{dQ}{dP}|_{\mathcal{F}_{kT/N}}=\mathcal{H}(k).
\end{equation*}
From (\ref{3.23+}) it follows that ${\{\tilde{M}(k)\}}_{k=0}^N$
is a $Q$--martingale with respect to the
filtration ${\{\mathcal{F}_{kT/N}\}}_{k=0}^N$.
Define a $Q$--martingale ${\{\hat{S}(t)\}}_{t=0}^T$ by
\begin{equation*}\label{3.24-}
\hat{S}(t)=:E_Q(\tilde{M}(N)|\mathcal{F}_t)=E_Q(\tilde{M}(k+1)|\mathcal{F}_t), \ \ t\in [kT/N, (k+1)T/N], \ \ k<N
\end{equation*}
where $E_Q$ denotes the expectation with respect to the probability measure $Q$.
From the third condition on the martingale $M$ it follows
 that for any $k<N$  and $i=1,...,d$
\begin{equation}\label{3.24}
(1-\epsilon/3)\tilde{M}_i(k)<\tilde{M}_i(k+1)<(1+\epsilon/3)\tilde{M}_i(k) \ \ Q \ \ \mbox{a.s.}
\end{equation}
Let $\Theta\subset\mathbb{R}^d_{++}$ be the (finite) set of all possible values
of the random variables
$M(0),M(1)$\\
$,...,M(N)$. Since $\Theta$ is finite we can choose $\delta>0$
such that for any $0<\lambda<1$ and $x,y\in \Theta$
which satisfy $\max\left(\frac{x_i}{y_i},\frac{y_i}{x_i}\right)<\frac{1}{1-\epsilon/3}$, $i=1,...,d$,
we have the relation
\begin{eqnarray*}
&\left\{z\in\mathbb{R}^d:||z-\lambda x -(1-\lambda)y||<\delta(N+1)\right\}\subset\\
&\left\{z\in\mathbb{R}^d:\max\left(\frac{x_i}{z_i},\frac{z_i}{x_i}\right)<1+\epsilon/2, \ i=1,...,d\right\}.
\end{eqnarray*}
This together with (\ref{3.21}) and (\ref{3.24}) yields that for any $t\in [kT/N,(k+1)T/N]$ and $i=1,...,d$
\begin{equation}\label{3.24+}
(1-\epsilon/2)\tilde{M}_i(k)<\tilde{S}_i(t)<(1+\epsilon/2)\tilde{M}_i(k) \ \ Q \ \ \mbox{a.s.}
\end{equation}
From (\ref{3.24-})--(\ref{3.24}) we obtain
\begin{equation}\label{3.24++}
(1-\epsilon/3)\tilde{M}_i(k)<\hat{S}_i(t)<(1+\epsilon/3)\tilde{M}_i(k) \ \ Q \ \ \mbox{a.s.}
\end{equation}
By combining (\ref{3.24+})--(\ref{3.24++}) we arrive at (\ref{3.20}).
Finally, from the second assumption on $M$ we
obtain that the distribution of
${\{\tilde{M}(k)\}}_{k=0}^N$ (under $Q$)
equals to the distribution of ${\{M(k)\}}_{k=0}^N$ and
(\ref{3.20+-++}) holds true.
\end{proof}

\section{Robust optimal stopping and related convex analysis}\label{sec2+}\setcounter{equation}{0}
Let
$(\Omega_W, \mathcal{F}^{W}, {P}^{W})$
be a complete probability space together with a standard
$d$--dimensional Brownian motion
$W={\{(W_1(t),...,W_d(t))\}}_{t=0}^T$ and the right continuous filtration
$\mathcal{F}^{W}_t=\sigma\big\{\sigma{\{W(s)|s\leq{t}\}
\bigcup\mathcal{N}}\big\}$,
where $\mathcal{N}$ is the collection of all ${P}^W$ null sets.
For any $u\in [0,T]$ let $\mathcal{T}^W_{[0,u]}$
be the set of all stopping times with respect
to the Brownian filtration ${\{\mathcal{F}^W_t\}}_{t=0}^T$
which take values in the interval $[0,u]$. For any $x\in\mathbb{R}^d_{+}$ denote by $\Gamma(x)$ the set of
all bounded $d$--dimensional Brownian martingales ${\{M(t)\}}_{t=0}^T$,
such that for any $t$, $M(t)$
takes values in $\mathbb{R}^{d}_{+}$
and satisfies $M(0)=x$.
Define
\begin{equation}\label{2+.0}
\mathbb{V}(x):=\sup_{M\in\Gamma(x)}\inf_{\tau\in\mathcal{T}^W_{[0,T]}}E^W\left(F(M(\tau))+\Delta\mathbb{I}_{\tau<T}\right), \ \ x\in\mathbb{R}^d_{+}
\end{equation}
where $E^W$ denotes the expectation with respect to the probability measure $P^W$.
We will prove that $\mathbb{V}(s)=R(s)=\hat{V}(s)=V_{\kappa}(s)$. In this section we show the inequality
$\mathbb{V}(s)\geq R(s)\geq \hat{V}(s)$. First we prove that
the right hand side of (\ref{2+.0})
does not depend on the maturity date $T$.
\begin{lem}\label{lem2+.0}
For any $u\in (0,T]$,
\begin{equation}\label{2+.1}
\mathbb{V}(x)=\sup_{M\in\Gamma(x)}\inf_{\tau\in\mathcal{T}^W_{[0,u]}}E^W\left(F(M(\tau))+\Delta\mathbb{I}_{\tau<u}\right), \ \ x\in\mathbb{R}^d_{+}.
\end{equation}
\end{lem}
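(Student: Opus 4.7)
The plan is to prove the equality by establishing both inequalities; the direction $\mathbb{V}(x) \geq$ the right-hand side of (\ref{2+.1}) is routine, while the reverse direction rests on the scaling symmetry of Brownian motion.

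For the easy direction, I associate to each $M \in \Gamma(x)$ its stopped version $M'(t) := M(t \wedge u)$, which still lies in $\Gamma(x)$. Given $\tau \in \mathcal{T}^W_{[0,u]}$, setting $\tau' := \tau$ on $\{\tau < u\}$ and $\tau' := T$ on $\{\tau = u\}$ produces a $\tau' \in \mathcal{T}^W_{[0,T]}$ with $M'(\tau') = M(\tau)$ and $\mathbb{I}_{\tau' < T} = \mathbb{I}_{\tau < u}$; conversely, for any $\tau' \in \mathcal{T}^W_{[0,T]}$, the time $\tau := \tau' \wedge u \in \mathcal{T}^W_{[0,u]}$ satisfies $M'(\tau') = M(\tau)$ and $\mathbb{I}_{\tau' < T} \geq \mathbb{I}_{\tau < u}$. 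These two observations give the equality of the two inner infima for $M'$ and $M$, and taking $\sup$ over $M \in \Gamma(x)$ (using $M' \in \Gamma(x)$) yields $\mathbb{V}(x) \geq$ RHS.

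The substantive direction is the reverse inequality, which I establish by time-compression. Set $\alpha := T/u \geq 1$ and introduce the rescaled Brownian motion $\hat W(t) := \sqrt{\alpha}\, W(t/\alpha)$, $t \in [0,T]$. Then $\hat W$ is a standard $d$-dimensional BM depending only on $W|_{[0,u]}$, with natural filtration $\mathcal{F}^{\hat W}_t = \mathcal{F}^W_{t/\alpha}$. For $M \in \Gamma(x)$ the martingale representation theorem gives $M(t) = x + \int_0^t \Phi(s, W|_{[0,s]})\, dW(s)$ for a measurable $\Phi$; define the ``same functional'' martingale $\hat M(t) := x + \int_0^t \Phi(s, \hat W|_{[0,s]})\, d\hat W(s)$, so that $(\hat M, \hat W)$ has the same joint distribution as $(M, W)$. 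Now set $\tilde M(s) := \hat M(\alpha s)$ for $s \in [0,u]$ and extend $\tilde M$ constantly to $(u, T]$. Since $\hat M(\alpha s)$ is $\mathcal{F}^{\hat W}_{\alpha s} = \mathcal{F}^W_s$-measurable and a martingale in the filtration $\{\mathcal{F}^{\hat W}_{\alpha s}\}_s$, it follows that $\tilde M$ is a bounded $\mathcal{F}^W$-martingale with values in $\mathbb{R}^d_+$ and $\tilde M(0) = x$, i.e.\ $\tilde M \in \Gamma(x)$.

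Finally, the map $\tau \leftrightarrow \sigma := \alpha\tau$ is a bijection between $\mathcal{T}^W_{[0,u]}$ and $\mathcal{T}^{\hat W}_{[0,T]}$, because $\{\alpha\tau \leq t\} = \{\tau \leq t/\alpha\} \in \mathcal{F}^W_{t/\alpha} = \mathcal{F}^{\hat W}_t$, and conversely. Under this bijection $\tilde M(\tau) = \hat M(\sigma)$ and $\mathbb{I}_{\tau < u} = \mathbb{I}_{\sigma < T}$, whence
\begin{equation*}
\inf_{\tau \in \mathcal{T}^W_{[0,u]}} E^W\!\left[F(\tilde M(\tau)) + \Delta\mathbb{I}_{\tau < u}\right] = \inf_{\sigma \in \mathcal{T}^{\hat W}_{[0,T]}} E^W\!\left[F(\hat M(\sigma)) + \Delta\mathbb{I}_{\sigma < T}\right],
\end{equation*}
and the law-equivalence of $(\hat M, \hat W)$ and $(M, W)$ further identifies this quantity with $\inf_{\sigma \in \mathcal{T}^W_{[0,T]}} E^W[F(M(\sigma)) + \Delta\mathbb{I}_{\sigma < T}]$. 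Taking $\sup$ over $M \in \Gamma(x)$ (and using $\tilde M \in \Gamma(x)$) gives $\mathbb{V}(x) \leq$ RHS. The main technical obstacle is the last step: verifying that the filtration-dependent infima really do transfer through the distributional equivalence of $(\hat M, \hat W)$ and $(M, W)$, which is a standard consequence of the representation of $\mathcal{F}^W$-stopping times, up to null sets, as measurable functionals of the Brownian path.
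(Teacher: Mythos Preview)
Your proof is correct and rests on the same idea as the paper's: the Brownian scaling $t\mapsto \alpha t$ with $\alpha=T/u$ identifies the $[0,T]$-problem with the $[0,u]$-problem. The organization differs. The paper argues in one stroke: setting $\tilde W(t)=\alpha^{-1/2}W(\alpha t)$ on $[0,u]$, it observes that $M\mapsto M(\alpha\cdot)$ and $\tau\mapsto\tau/\alpha$ are bijections $\Gamma(x)\to\tilde\Gamma(x)$ and $\mathcal{T}^W_{[0,T]}\to\mathcal{T}^{\tilde W}_{[0,u]}$, and then invokes the distributional equivalence of $\tilde W$ and $W|_{[0,u]}$ to pass from $(\tilde\Gamma,\mathcal{T}^{\tilde W})$ to $(\Gamma,\mathcal{T}^{W})$. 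In particular, the paper never needs the martingale representation theorem: the time-changed process $M(\alpha\cdot)$ is automatically a $\mathcal{F}^{\tilde W}$-martingale because $\mathcal{F}^{\tilde W}_t=\mathcal{F}^W_{\alpha t}$.

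Your route instead splits into two inequalities, handles the trivial one by stopping at $u$, and for the substantive one passes through the representation $M=x+\int\Phi(s,W|_{[0,s]})\,dW(s)$ to define a copy $\hat M$ driven by the scaled Brownian motion $\hat W$. This is correct but heavier than necessary: the detour through $\Phi$ only serves to manufacture a martingale with the same law as $M$ but adapted to $\mathcal{F}^{\hat W}$, something the paper obtains for free by time-changing $M$ itself. Both arguments ultimately rely on the same distributional transfer of the filtration-dependent optimal stopping value, which you correctly flag as the one point requiring care; the paper leaves this step implicit.
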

\begin{proof}
Let $x\in\mathbb{R}^d_{+}$ and $u\in (0,T]$. Set $\alpha=\frac{T}{u}$.
Consider the Brownian motion given by
$\tilde{W}(t):=\frac{1}{\sqrt\alpha} W(\alpha t)$, $t\in [0,u]$.
Let $\{{\mathcal{F}}^{\tilde{W}}_t\}_{t=0}^{u}$ be the (usual) filtration which is generated by $\tilde{W}$
and let ${\mathcal T}^{\tilde W}_{[0,u]}$ be the
a set of all stopping times with values in $[0,u]$ with respect to this filtration. For any
$x\in\mathbb{R}^d_{+}$ denote by $\tilde{\Gamma}(x)$ the set of all martingales ${\{M(t)\}}_{t=0}^{u}$
with respect to $\{\mathcal{F}^{\tilde W}_t\}_{t=0}^{u}$
such that for any $t$, $M(t)$
takes values in $\mathbb{R}^{d}_{+}$
and satisfies $M(0)=x$.
Observe that for any $t\in [0,u]$, ${\mathcal{F}}^{\tilde W}_t=\mathcal{F}^W_{\alpha t}$.
Define the maps
$\Psi:\Gamma(x)\rightarrow \tilde\Gamma(x)$ and
$\Phi:\mathcal{T}^W_{[0,T]}\rightarrow {\mathcal T}^{\tilde W}_{[0,u]}$ by
\begin{equation*}\label{2+.2}
\Psi(M)(t)=M(\alpha t), \ t\in [0,u] \ \ \mbox{and} \ \ \Phi(\tau)=\frac{\tau}{\alpha}.
\end{equation*}
Observe that $\Psi$ and $\Phi$ are bijections, and $\Psi(M)(\Phi(\tau))=M(\tau)$ for any
$M\in\Gamma(x)$ and $\tau\in\mathcal{T}^W_{[0,T]}$.
Thus
\begin{eqnarray*}\label{2+.3}
&\mathbb{V}(x)=\sup_{M\in\Gamma(x)}\inf_{\tau\in\mathcal{T}^W_{[0,T]}}
E^W\left(F(M(\tau))+\Delta\mathbb{I}_{\tau<T}\right)=\\
&\sup_{M\in\Gamma(x)}\inf_{\tau\in\mathcal{T}^W_{[0,T]}}
E^W\left(F\left(\Psi(M)(\Phi(\tau))\right)+\Delta\mathbb{I}_{\Phi(\tau)<u}\right)=\nonumber\\
&\sup_{M\in\tilde\Gamma(x)}\inf_{\tau\in\mathcal{T}^{\tilde W}_{[0,u]}}
E^W\left(F(M(\tau))+\Delta\mathbb{I}_{\tau<u}\right)=\nonumber\\
&\sup_{M\in\Gamma(x)}\inf_{\tau\in\mathcal{T}^{W}_{[0,u]}}E^W\left(F(M(\tau))+\Delta\mathbb{I}_{\tau<u}\right).\nonumber
\end{eqnarray*}
\end{proof}
In the next lemma we prove several properties of the function $\mathbb{V}$ which will be essential in the Proof of Theorem \ref{thm2.1}.
\begin{lem}\label{lem2+.2}
The function $\mathbb{V}$ is satisfying $\mathbb{V}\in\mathcal{G}$,
where recall that $\mathcal{G}$ was defined after
(\ref{2.15+}).
\end{lem}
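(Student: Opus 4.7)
I plan to verify the two defining properties of $\mathcal G$ for $\mathbb V$ in turn: (i) continuity together with the sandwich $F\le\mathbb V\le F+\Delta$, and (ii) concavity on every convex set $D$ on which $\mathbb V<F+\Delta$. The bounds in (i) are immediate. The upper bound $\mathbb V(x)\le F(x)+\Delta$ comes from plugging $\tau\equiv 0$ into the inner infimum: for any $M\in\Gamma(x)$ the payoff is $F(M(0))+\Delta=F(x)+\Delta$. The lower bound $\mathbb V(x)\ge F(x)$ comes from testing with the constant martingale $M\equiv x$ together with $\tau\equiv T$, which gives exactly $F(x)$. For continuity I use the Lipschitz property of $F$: given an $\epsilon$-optimal $M_x\in\Gamma(x)$ for $\mathbb V(x)$, the coordinatewise scaled martingale $M_{x'}(t):=(x'/x)M_x(t)$ lies in $\Gamma(x')$, and since $M_x$ is bounded, $\|M_{x'}(t)-M_x(t)\|=O(\|x-x'\|)$ uniformly in $t$; the Lipschitz bound on $F$ then propagates this to $|\mathbb V(x)-\mathbb V(x')|=O(\|x-x'\|)$.

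The heart of the lemma is (ii). Fix $x,y\in D$ and $z=\lambda x+(1-\lambda)y\in D$, set $v:=\lambda\mathbb V(x)+(1-\lambda)\mathbb V(y)$, and fix $\epsilon,\delta>0$ small. By Lemma \ref{lem2+.0} I pick $M_x^*\in\Gamma(x),M_y^*\in\Gamma(y)$ that are $\epsilon$-optimal for the variant of the problem with stopping on $[0,T-\delta]$. I construct $M_z\in\Gamma(z)$ in two phases. In Phase 1 on $[0,\delta]$, let $U$ be an $\mathcal F^W_\delta$-measurable $\{x,y\}$-valued random variable with $P(U=x)=\lambda$, produced by thresholding the independent increment $W_1(\delta)-W_1(\delta-\delta^2)$, and set $M_z(t)=E[U\mid\mathcal F^W_t]$; this makes $M_z\equiv z$ on $[0,\delta-\delta^2]$ and concentrates all of the mixing in the short window $[\delta-\delta^2,\delta]$. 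In Phase 2 on $[\delta,T]$, conditional on $\{U=x\}$ continue as $M_x^*(t-\delta)$ driven by the independent Brownian motion $W'(\cdot):=W(\delta+\cdot)-W(\delta)$, and analogously on $\{U=y\}$. The result is a continuous, $\mathbb R^d_+$-valued martingale in $\Gamma(z)$.

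For any $\tau\in\mathcal T^W_{[0,T]}$, split $E[F(M_z(\tau))+\Delta\mathbb I_{\tau<T}]$ over the events $\{\tau\ge\delta\}$ and $\{\tau<\delta\}$. On $\{\tau\ge\delta\}$, the independence of $W'$ from $\mathcal F^W_\delta$ together with the $\epsilon$-optimality of $M_x^*,M_y^*$ force the conditional expectation (given $\mathcal F^W_\delta$) to be at least $\mathbb V(U)-\epsilon$; integrating yields a contribution at least $\lambda(\mathbb V(x)-\epsilon)P(\tau\ge\delta\mid U=x)+(1-\lambda)(\mathbb V(y)-\epsilon)P(\tau\ge\delta\mid U=y)$. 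On $\{\tau<\delta\}$, the two-speed construction combined with Jensen's inequality for the submartingale $F\circ M_z$ at the bounded stopping time $\tau\wedge\delta$ produces a lower bound of the form $(F(z)+\Delta)P(\tau<\delta)$ minus $F$-dependent corrections proportional to $P(U=x,\tau\ge\delta)$ and $P(U=y,\tau\ge\delta)$. Assembling the two estimates and letting $\epsilon,\delta\downarrow 0$ I arrive at the reduced inequality
\[
\mathbb V(z)\ge\min\bigl(v,\,F(z)+\Delta\bigr).
\]
Since $z\in D$ forces $\mathbb V(z)<F(z)+\Delta$, the minimum must equal $v$ (otherwise $\mathbb V(z)\ge F(z)+\Delta$, a contradiction), which yields the desired concavity $\mathbb V(z)\ge\lambda\mathbb V(x)+(1-\lambda)\mathbb V(y)$.

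The main obstacle lies precisely in the $\{\tau<\delta\}$ analysis: the mixing martingale must traverse the full distance $\|x-y\|$ inside $[0,\delta]$, and during the mixing window the stopper can infer partial information about $U$ from the motion of $M_z$, so the joint distribution of $(U,\tau)$ can be steered far from the diagonal. A direct Jensen bound alone gives only a weaker estimate, short of the desired `min' inequality by the convexity gap $\lambda F(x)+(1-\lambda)F(y)-F(z)$; confining the Phase 1 motion to the short sub-interval $[\delta-\delta^2,\delta]$ and carefully coupling the stopper's observation with the two independent sources of randomness ($\mathcal F^W_\delta$ and $W'$) is what closes the gap and delivers the `min' inequality, after which the hypothesis $z\in D$ converts it into genuine concavity.
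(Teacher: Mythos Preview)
Your treatment of property (i) is fine and matches the paper. The genuine gap is in your concavity argument (ii). The ``min'' inequality
\[
\mathbb V(z)\ \ge\ \min\bigl(v,\,F(z)+\Delta\bigr)
\]
is asserted but never established. If one carries out the Jensen/submartingale computation you describe, writing $q_x=P(U=x,\tau\ge\delta)$ and $q_y=P(U=y,\tau\ge\delta)$, one gets
\[
E\bigl[F(M_z(\tau))+\Delta\mathbb I_{\tau<T}\bigr]\ \ge\ (F(z)+\Delta)+q_x\bigl(\mathbb V(x)-F(x)-\Delta\bigr)+q_y\bigl(\mathbb V(y)-F(y)-\Delta\bigr)-O(\epsilon),
\]
which is linear in $(q_x,q_y)$ with \emph{non-positive} coefficients (since $\mathbb V\le F+\Delta$). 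The stopper can drive $(q_x,q_y)$ to $(\lambda,1-\lambda)$, and the lower bound then collapses to $v-\bigl[\lambda F(x)+(1-\lambda)F(y)-F(z)\bigr]-O(\epsilon)$, i.e.\ exactly $v$ minus the convexity gap. Shrinking the mixing window to $[\delta-\delta^2,\delta]$ does not help: during that window the trajectory of $M_z$ still reveals $U$ in full, so the stopper can condition on $U$ before committing, and no ``careful coupling'' of $\mathcal F^W_\delta$ with $W'$ changes this. In short, your outline bounds the objective for \emph{arbitrary} $\tau$, and that route cannot avoid the gap.

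The paper bypasses this obstacle with a different idea. It works not with an arbitrary $\tau$ but with the \emph{optimal} one, via the Snell envelope
\[
U(t)=\operatorname*{ess\,inf}_{\tau\ge t}E^W\bigl[F(M(\tau))+\Delta\mathbb I_{\tau<T}\mid\mathcal F^W_t\bigr],
\qquad
\tilde\tau=\inf\{t:U(t)=F(M(t))+\Delta\}\wedge T.
\]
During Phase~1 the constructed martingale lives on the segment $[x,y]\subset D$ (this uses the convexity of $D$), so $\mathbb V(M(t))<F(M(t))+\Delta$ there. Using the time-invariance Lemma~\ref{lem2+.0} and the explicit form of $M$ in Phase~1, the paper shows $U(t)\le \mathbb V(M(t))$, hence $U(t)<F(M(t))+\Delta$ for all $t$ in Phase~1; consequently $\tilde\tau$ is at least the Phase~1 endpoint. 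Once that is known, the Phase~2 contribution alone gives $\mathbb V(z)\ge\lambda\mathbb V(x)+(1-\lambda)\mathbb V(y)-\epsilon$ directly, with no convexity gap to close. This Snell-envelope step (``the optimal stopper never stops while $M\in D$'') is the missing ingredient in your argument.
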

\begin{proof}
Consider the stopping time $\tilde\tau=0$ and the martingale $\tilde{M}\equiv x$. Clearly,
\begin{eqnarray*}\label{2+.7+}
&F(x)=\inf_{\tau\in\mathcal{T}^W_{[0,T]}}E^W\left(F(\tilde{M}(\tau))+\Delta\mathbb{I}_{\tau<T}\right)\leq \mathbb{V}(x)\leq\\
&\sup_{M\in\Gamma(x)}E^W\left(F({M}(\tilde\tau))+\Delta\mathbb{I}_{\tilde\tau<T}\right)=F(x)+\Delta.\nonumber
\end{eqnarray*}
Next, for any $x\in\mathbb{R}^d_{+}$
define the bijection $\Upsilon_x:\Gamma(1,...,1)\rightarrow\Gamma(x)$
by $\Upsilon_x(M)=(x_1 M_1,...,x_d M_d)$.
Since $F$ is a Lipschitz continuous function, there is a constant $\tilde{L}$
such that for any $x,y\in\mathbb{R}^d_{+}$
\begin{eqnarray*}\label{2+.9}
&|\mathbb{V}(y)-\mathbb{V}(x)|\leq\\
&\sup_{M\in\Gamma(1,...,1)}\sup_{\tau\in\mathcal{T}^W_{[0,T]}}
E^W\big{|}F \left(\Upsilon_y(M)(\tau)\right)-
F \left(\Upsilon_x(M)(\tau)\right)\big{|}\leq\\
&\sup_{M\in\Gamma(1,...,1)}\sup_{\tau\in\mathcal{T}^W_{[0,T]}}\sum_{i=1}^d \tilde{L}|y_i-x_i|E^WM_i(\tau)=
\tilde{L}\sum_{i=1}^d |y_i-x_i|.\nonumber
\end{eqnarray*}
Thus $\mathbb{V}$ is continuous and satisfying $F\leq \mathbb{V}\leq F+\Delta$.
Finally, we prove that if $\mathbb{V}<F+\delta$ in a convex region $D$ then
$\mathbb{V}$ is concave in $D$.
Let $x^{(1)},x^{(2)},x^{(3)}\in D$ such that
$x^{(3)}=\lambda x^{(1)}+(1-\lambda)x^{(2)}$ for some $0<\lambda<1$.
Choose $\epsilon>0$. From Lemma \ref{lem2+.0} it follows that there exist martingales
$M_i\in\Gamma(x^{(i)})$, $i=1,2$ such that
\begin{equation}\label{2+.10}
\mathbb{V}(x^{(i)})<\epsilon+\inf_{\tau\in\mathcal{T}^W_{[0,T/2]}}E^W
\left(F(M(\tau))+\Delta\mathbb{I}_{\tau<T/2}\right) , \ \ i=1,2.
\end{equation}
For any $i=1,2$ let
$\phi_i:{C}([0,T/2];\mathbb{R}^d)\rightarrow {C}([0,T/2];\mathbb{R}^d)$
be a map such that $\phi_i(t,y)$ depends only on the restriction of $y$ to the interval $[0,t]$,
and
\begin{equation*}\label{2+.11}
M^{(i)}_{|[0,T/2]}=\phi_i(W_{|[0,T/2]}),  \ i=1,2.
\end{equation*}
Consider the Brownian motion $W^{(1)}(t)=W(t+T/2)-W(T/2)$, $t\in [0,T/2]$.
Let $A\subset \mathbb{R}^d$ be such that $P^W(W(T/2)\in A)=\lambda$.
Define the function $f:[0,T/2]\times\mathbb{R}^d\rightarrow\mathbb{R}^d$ by
$f(t,y)=x^{(1)}P^W(y+W(T/2-t)\in A)+x^{(2)}P^W(y+W(T/2-t)\notin A)$. Observe that
the stochastic process ${\{M(t)\}}_{t=0}^T$ defined by
\begin{eqnarray*}\label{2+.11}
&M(t):=f(t,W(t))
 \ \mbox{for} \ t\in [0,T/2)\\
&\mbox{and} \ \ M_{|[T/2,T]}:=\mathbb{I}_{W(T/2)\in A}\phi_1({W}^{(1)})+ \mathbb{I}_{W(T/2)\notin A}\phi_2({W}^{(1)})\nonumber
\end{eqnarray*}
is a martingale which satisfies $M\in \Gamma(x^{(3)})$.
Next, define the stochastic processes
\begin{equation*}\label{2+.12}
U^{(i)}(t)=ess \inf_{\tau\in\mathcal{T}^W_{[0,T/2]},\tau\geq t}E^W(F(M^{(i)}(\tau))+
\Delta\mathbb{I}_{\tau<T/2}|\mathcal{F}^W_t), \ t\in [0,T/2], \ i=1,2
\end{equation*}
and
\begin{equation*}\label{2+.13}
U(t)=ess \inf_{\tau\in\mathcal{T}^W_{[0,T]},\tau\geq t}E^W(F(M(\tau))+
\Delta\mathbb{I}_{\tau<T}|\mathcal{F}^W_t), \ t\in [0,T].
\end{equation*}
Define the stopping time  $\tilde\tau\in\mathcal{T}^W_{[0,T]}$ by,
\begin{equation*}\label{2+.14}
\tilde\tau=\inf\{t|U(t)=F(M(t))+\Delta\}\wedge{T}.
\end{equation*}
From the general theory of optimal stopping (see \cite{PS}, Chapter I) it follows that
\begin{equation*}\label{2+.15}
U(0)=E^W(F(M(\tilde\tau))+
\Delta\mathbb{I}_{\tilde\tau<T}).
\end{equation*}
Fix $0<v<T/2$. Define the Brownian motions
${W}^{(2)}(t)=W(t+v)-W(v)$, $t\in [0,T/2-v]$
and $W^{(3)}(t)=W(t+T/2-v)-W(T/2-v)$, $t\in [0,T/2]$.
Observe that  for any $t<T/2-v$,
$E^W\left(f(T/2,y+W(T/2-v))|\mathcal{F}^W_t\right)=
f(t+v,y+W(t))$, and so we can define the martingale
$M^{(y)}\in\Gamma(f(v,y))$
by
\begin{eqnarray*}\label{2+.15+-}
&M^{(y)}(t)=f(t+v,y+W(t)), \ \mbox{for} \ t\in [0,T/2-v),\\
&M^{(y)}_{|[T/2-v,T-v]}=\mathbb{I}_{y+{W}(T/2-v)\in A}\phi_1(W^{(3)})+
\mathbb{I}_{y+{W}(T/2-v)\notin A}\phi_2(W^{(3)}),\nonumber\\
&\mbox{and} \ \ M^{(y)}(t)=M^{(y)}(T-v) \ \ \mbox{for} \ t\in [T-v,T].\nonumber
\end{eqnarray*}
Clearly
\begin{eqnarray*}\label{2+.15+}
&M(t+v)=f(t+v,W(v)+{W}^{(2)}(t)), \ \mbox{for} \ t\in [0,T/2-v),\\
&\mbox{and} \ \ M_{|[T/2,T]}=\mathbb{I}_{W(v)+{W}^{(2)}(T/2-v)\in A}\phi_1({W}^{(1)})+\nonumber\\
&\mathbb{I}_{W(v)+{W}^{(2)}(T/2-v)\notin A}\phi_2({W}^{(1)}).\nonumber
\end{eqnarray*}
From the fact that
the Brownian motions, ${\{W(t)\}}_{t=0}^{T/2-v}$ and
${\{W^{(3)}(t)\}}_{t=0}^{T/2}$ are independent,
 ${\{W(t)\}}_{t=0}^{T/2}$ and ${\{W^{(1)}(t)\}}_{t=0}^{T/2}$
are independent, and ${\{W(t)\}}_{t=0}^{v}$ and ${\{W^{(2)}(t)\}}_{t=0}^{T/2-v}$ are independent,
we obtain that
\begin{eqnarray}\label{2+.15++}
U(v)=\psi(W(v))
\end{eqnarray}
where
\begin{equation}\label{2+.16}
\psi(y):=\inf_{\tau\in\mathcal{T}^W_{[0,T-v]}}E^W\left(F(M^{(y)}(\tau))+\Delta\mathbb{I}_{\tau<T-v}\right), \ y\in\mathbb{R}^d.
\end{equation}
Since $M(v)\in D$ for any $v\in [0,T/2]$, from
Lemma \ref{lem2+.0} and (\ref{2+.15++})--(\ref{2+.16}) we get that
$U(v)\leq \mathbb{V}\left(f(v,W(v))\right)=\mathbb{V}(M(v))< F(M(v))+\Delta$ for any $v\in [0,T/2]$.
Thus $\tilde\tau\geq T/2$. This together with (\ref{2+.10}) and the fact that $W^{(1)}$
is independent of $\mathcal{F}^W_{T/2}$ yields
\begin{eqnarray*}\label{2+.17}
&\mathbb{V}(x^{(3)})\geq U(0)=E^WU(T/2)=\lambda U^{(1)}(0)+(1-\lambda)U^{(2)}(0)\geq \\
&\lambda \mathbb{V}(x^{(1)})+(1-\lambda)\mathbb{V}(x^{(2)})-\epsilon,\nonumber
\end{eqnarray*}
and by taking $\epsilon\downarrow 0$ we complete the proof.
\end{proof}
Next, we provide some convex analysis for the set $\mathcal{G}$
and the static super--replication price $\hat{V}(s)$.
\begin{lem}\label{lem2+.3}
The function $R$ which is defined by (\ref{2.20+})
is the minimal element of $\mathcal{G}$.
\end{lem}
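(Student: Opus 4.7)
The proof splits into verifying $R \in \mathcal{G}$ and showing $R \le g$ for every $g \in \mathcal{G}$.

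For $R \in \mathcal{G}$: continuity at the interface $\{\|x\| = H(x)\}$ holds by the definition of $H$ (the two defining pieces coincide there); the upper bound $R \le F+\Delta$ is immediate; for $F \le R$ in the region $\{\|x\| < H(x)\}$, write $x = \lambda x^*$ with $x^* = H(x) x/\|x\|$ and $\lambda = \|x\|/H(x) \in [0,1)$, and combine the convexity of $F$ with the boundary identity $F(x^*) + \Delta = F(0) + \langle B, x^*\rangle$:
\[
F(x) \le \lambda F(x^*) + (1-\lambda) F(0) \le \lambda(F(x^*) + \Delta) + (1-\lambda) F(0) = F(0) + \langle B, x\rangle = R(x);
\]
the concavity condition is trivial because $R$ coincides with the affine function $F(0)+\langle B, \cdot\rangle$ on $\{R < F+\Delta\} = \{\|x\| < H(x)\}$.

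For minimality, fix $g \in \mathcal{G}$ and, for each direction $e$ with $\|e\|=1$, set $T = H(e)$, $g_e(t) = g(te)$, $F_e(t) = F(te)$, and $L(t) = F(0) + t\langle B, e\rangle$. The central tool is a \emph{chord argument}: if $[x_1, x_2] \subset \{g < F+\Delta\}$ is a line segment with $g(x_i) = F(x_i)+\Delta$ at the endpoints of the maximal such segment, then concavity of $g$ (from condition (ii)) gives $g \ge \mathrm{chord}$ on $[x_1, x_2]$, while convexity of $F+\Delta$ gives $\mathrm{chord} \ge F + \Delta$; combined with $g \le F+\Delta$, this forces $g = F+\Delta$ on the segment, contradicting $g < F+\Delta$ in its interior. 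Applied along each ray from the origin, this rules out bounded components of $\{g_e < F_e+\Delta\}$ strictly inside $(0, \infty)$. A parallel argument using the Lipschitz behaviour of $F_e$ together with the concavity of $g_e$ rules out components of the form $(a, \infty)$ with $a > 0$: the asymptotic slopes of $g_e$ and $F_e$ must coincide, and starting from $g_e(a) = F_e(a)+\Delta$ this forces $g_e = F_e+\Delta$ on $[a, \infty)$. Applying the chord argument to transverse segments (using convexity of $F$ in the ambient space) then rules out components of the form $[0, b)$ with $b > T$.

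On the only admissible component $[0, b)$, with $b \le T$, concavity of $g_e$ yields
\[
g_e(t) \ge \frac{b-t}{b}\, F(0) + \frac{t}{b}\, (F_e(b)+\Delta) \ge F(0) + t\langle B, e\rangle = L(t),
\]
where the second inequality holds because the secant slope $(F_e(b)+\Delta - F(0))/b$ is at least $\langle B, e\rangle$ for $b \le T$ by the defining property of $T$. Hence $g(te) \ge R(te)$ for all $t \ge 0$ and all directions $e$, which proves minimality. The main obstacle is ruling out components $[0,b)$ with $b > T$ along a ray: the 1D chord argument alone is insufficient when $L$ is not tangent to $F_e+\Delta$ at $T$ (which can happen in dimensions $d \ge 2$ for directions off the coordinate axes), and the resolution requires applying the chord argument to transverse segments in the region $\{\|x\| \ge H(x)\}$, where the convexity of $F+\Delta$ in the ambient space supplies the needed bound.
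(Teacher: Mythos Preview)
Your verification that $R\in\mathcal G$ and your chord argument ruling out bounded interior components and right-unbounded components $(a,\infty)$ of $\{g_e<F_e+\Delta\}$ are fine. The problem is the step you yourself flag as the ``main obstacle'': ruling out a component $[0,b)$ with $b>T=H(e)$. You dispose of it with one sentence---``applying the chord argument to transverse segments in the region $\{\|x\|\ge H(x)\}$''---but this is an assertion, not an argument. A chord argument needs a segment through $Te$ whose endpoints lie in $\{g=F+\Delta\}$ and whose chord dominates $F+\Delta$ at $Te$; nothing you have established controls the shape of the connected component of $\{g<F+\Delta\}$ containing $Te$ in the ambient space, so there is no guarantee such a transverse segment exists. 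Along the ray itself the only known boundary point is $be$, and the chord from $(0,g(0))$ to $(be,F_e(b)+\Delta)$ can have slope strictly smaller than $\langle B,e\rangle$ when $b>T$ and $L$ is not tangent to $F_e+\Delta$ at $T$---precisely the situation you note occurs off the coordinate axes---so the one-dimensional bound is genuinely too weak, and you have not supplied a replacement.

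The paper closes this gap by a different mechanism: induction on the dimension $d$. From a hypothetical point $\tilde y$ with $g(\tilde y)<F(0)+\langle B,\tilde y\rangle$ it constructs a hyperplane through $\tilde y$ that meets some coordinate axis at a point $z$, and analyses $g$ along the half-line from $z$ through $\tilde y$. The inductive hypothesis, applied on the boundary face of $\mathbb R^d_+$ containing the far end of that half-line (where one coordinate vanishes), supplies the lower bound $g\ge F(0)+\langle B,\cdot\rangle$ there; this is exactly the ``extra endpoint'' your transverse chord was meant to provide, but justified by the lower-dimensional statement rather than by an unproven geometric claim about $\{g=F+\Delta\}$. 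A four-case analysis (according to whether the half-line leaves $\mathbb R^d_+$ and whether it meets $\{g=F+\Delta\}$) then yields the contradiction. To repair your approach you would need either a genuine construction of the transverse segment---which seems to require just the kind of boundary information the induction provides---or to import the inductive scheme.

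A minor point: your inequality $F(x)\le R(x)$ writes $x=\lambda x^*$ with $x^*=H(x)\,x/\|x\|$, tacitly assuming $H(x)<\infty$; the case $H(x)=\infty$ needs a separate (easy) limiting argument.
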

\begin{proof}
We will use induction on the dimension $d$. Let $d=1$.
From Lemma 2.4 in \cite{EV} it follows that $\mathcal{G}$
has a minimal element and from the fact that $F$
is convex it follows that the
minimal element is equal to $R$ which is given by (\ref{2.20+}).
Next, we prove that if the result is true
for any $d\leq n$, then it is true for $d=n+1$.
Assume by contradiction that the claim is false.
Thus there exists a function $g\in\mathcal{G}$
and $x\in\mathbb{R}^d_{++}$ such that $g({x})< R({x})$. Set,
$v=\inf\{t\geq 0|g(t{x})<R(t{x})\}$ and let ${y}=v{x}$. We argue that $||{y}||<H({y})$
(where $H$ was defined after (\ref{2.20+})). Indeed, if (by contradiction)
$||{y}||\geq H({y})$, then $H({y})<\infty$ and $g\big(H({y}){y}/||{y}||\big)\geq
R\big(H({y}){y}/||{y}||\big)=F\big(H({y}){y}/||{y}||\big)+\Delta$, thus
$g\big(H({y}){y}/||{y}||\big)=F\big(H({y}){y}/||{y}||\big)+\Delta$.
Define the function
$f(u)=F(u{y})+\Delta-g(u{y})$, $u\in [H({y})/||{y}||,\infty)$.
Since there exists some $\delta>0$ for which $f(1+\delta)>0$,
then from the fact that $F$ is convex and
$g\in\mathcal{G}$ we get that $f$ is a strictly
increasing convex function on the interval $[1+\delta,\infty)$, and so for sufficiently
large $u$ we will get that $f(u)>\Delta$, which is a
contradiction to the fact that $g\geq F$. Thus we conclude that
$H(y)>|| y||$, which means that there exist $\epsilon>0$ and
$\tilde{y}\in\mathbb{R}^d_{++}$ such that
\begin{equation}\label{2+.18+}
g(\tilde y)<F(0)+\langle \tilde y,B\rangle<F(\tilde y)+\Delta.
\end{equation}
Let $\xi\in\partial F(\tilde y)$ and consider the
hyperplane $K=\{\tilde{x}\in\mathbb{R}^d|\langle \tilde{x}-\tilde{y},\xi-B\rangle=0\}$,
(where $B$ is the vector which is given by (\ref{2.20})).
From (\ref{2+.18+}) and the convexity of $F$ it follows that
\begin{equation}\label{2+.18++}
F(\tilde x)-\langle \tilde x,B\rangle\geq F(\tilde{y})-\langle \tilde{y} ,B\rangle\geq F(0)-\Delta, \ \ \forall\tilde x\in K.
\end{equation}
Clearly, there is a point on $K$ of the form $z=(0,...,\alpha,0,...,0)$ for some $\alpha\geq 0$.
Consider the half--line $\mathcal{L}=\{z^{(\lambda)}:={z}+\lambda(\tilde{y}-{z})|\lambda\in\mathbb{R}_{+}\}\subset K$.
Define $\lambda_1=\inf\{\lambda\geq 0|z^{(\lambda)}\notin\mathbb{R}^d_{+}\}$
and $\lambda_2=\inf\{0\leq\lambda\leq \lambda_1|F\big(z^{(\lambda)}\big)+\Delta=
g\big(z^{(\lambda)}\big)\}$, where $\lambda_1$, $\lambda_2$
equal to $\infty$ if the corresponding sets are empty. We distinguish
between cases.\\
i. If $\lambda_1=\lambda_2=\infty$, then
$g<F+\Delta$
on the half--line $\{z^{(\lambda)}|\lambda\in\mathbb{R}_{+}\}$, and so
by applying the induction assumption for $d=1$ we obtain
\begin{equation}\label{2+.18+++}
g(z)\geq F(0)+\langle  z,B\rangle.
\end{equation}
Since the function $g$ is concave
on the half--line $\mathcal{L}$, then from (\ref{2+.18++})--(\ref{2+.18+++})
we get that
$g(\tilde y)\geq F(0)+\langle  \tilde y,B\rangle$, which is a contradiction to (\ref{2+.18+}).\\
ii. If $\lambda_1=\infty$ and $\lambda_2<\infty$, then by using the fact that $g(\tilde y)< F(\tilde y)+\Delta$
and a similar argument to the one before (\ref{2+.18+}) we obtain that
$\lambda_2>1$, in particular (\ref{2+.18+++}) is valid for this case as well. By
applying (\ref{2+.18++})--(\ref{2+.18+++}) together with the fact that $g$ is concave on the line segment
$\{z^{(\lambda)}|\lambda\in [0,\lambda_2]\}$ we get that
$g(\tilde y)\geq F(0)+\langle \tilde y,B\rangle$, which is a contradiction to (\ref{2+.18+}).\\
iii. Let $\lambda_1<\infty$ and $\lambda_2=\infty$. In this case (\ref{2+.18+++}) remains true. Since
$\tilde{y}\in\mathbb{R}_{++}$ then $\lambda_1>1$.
From the induction assumption we get that
$g(z^{(\lambda_1)})\geq R(z^{(\lambda_1)})=F(0)+\langle z^{(\lambda_1)}, B\rangle$. This together with
(\ref{2+.18+++}) and the fact
that $g$ is concave on the line segment
$\{z^{(\lambda)}|\lambda\in [0,\lambda_1]\}$ yields
$g(\tilde{y})\geq F(0)+\langle\tilde{y},B\rangle$,
which is a contradiction to (\ref{2+.18+}).\\
iv. Finally, let $\lambda_2\leq\lambda_1<\infty$.
From (\ref{2+.18++}) and the induction assumption it follows that
\begin{eqnarray}\label{2+.19}
&g(z^{(\lambda_1)})\geq F(0)+\langle z^{(\lambda_1)},B\rangle \ \ \mbox{and} \\
&g(z^{(\lambda_2)})= F(z^{(\lambda_2)})+\Delta \geq F(0)+\langle z^{(\lambda_2)},B\rangle.\nonumber
\end{eqnarray}
Define $\hat{\lambda}=\sup\{\lambda\leq\lambda_1|g(z^{(\lambda)})=F(z^{(\lambda)})+\Delta\}$.
Since $g$ is concave on the line segments
$\{z^{(\lambda)}|\lambda\in [0,\lambda_2]\}$
and $\{z^{(\lambda)}|\lambda\in [\hat\lambda,\lambda_1]\}$,
and $g=F+\Delta$ on the line segment
$\{z^{(\lambda)}|\lambda\in [\lambda_2,\hat\lambda]\}$, then from
(\ref{2+.18++}) and (\ref{2+.19}) we obtain that
$g(\cdot)\geq F(0)+\langle\cdot,B\rangle$ on the line segment
$\{z^{(\lambda)}|\lambda\in [0,\lambda_1]\}$, which is a contradiction to (\ref{2+.18+}).
\end{proof}
In the following lemma we show that there is
a trivial perfect hedge with an initial capital $R(s)$, where $s$ is the
initial stock position.
\begin{lem}\label{lem2+.4}
Let $s\in\mathbb{R}^d_{+}$ be an initial stock position. The
hedge $(\pi,\sigma)$ which is defined according to
(\ref{2.21})--(\ref{2.22}) is a perfect hedge.
\end{lem}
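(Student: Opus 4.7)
The plan is to dispose of the two cases of the statement separately, and within case i to reduce everything to a single pointwise inequality that follows from the minimality of $R$ proved in Lemma \ref{lem2+.3}.

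Case ii is essentially a book-keeping check. With $\sigma\equiv 0$ and $\gamma\equiv 0$ there are no transactions and the discounted portfolio value is identically $R(s)=F(s)+\Delta$ throughout $[0,T]$. Since $H(0,0)=Y(0)=F(s)\leq R(s)$ and $H(0,t)=X(0)=F(s)+\Delta=R(s)$ for $t>0$, the hedging inequality holds pointwise.

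For case i, the first step is to simplify the portfolio value. Because $R(s)<F(s)+\Delta$, the definition (\ref{2.20+}) places $s$ in the linear region $D:=\{x\in\mathbb{R}^d_+:F(0)+\langle B,x\rangle<F(x)+\Delta\}$, so $R(s)=F(0)+\langle B,s\rangle$. Since $\gamma\equiv B$ is constant, all Stieltjes integrals in the portfolio-value formula vanish and
\[
V^\pi_\kappa(t)=R(s)-\langle B,s\rangle+\langle B,\tilde S(t)\rangle=F(0)+\frac{\langle B,S(t)\rangle}{S_0(t)}.
\]
The pivot of the whole argument is the pointwise bound
\[
F(0)+\langle B,x\rangle\geq F(x)\qquad\text{for every }x\in\mathbb{R}^d_+,
\]
which I would establish by a dichotomy on $D$ versus $D^c$. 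On $D$, the formula for $R$ gives $R(x)=F(0)+\langle B,x\rangle$, and Lemma \ref{lem2+.3} together with $R\in\mathcal{G}$ yields $R\geq F$, so the bound follows. On $D^c$ the defining inequality of $D^c$ itself gives the stronger $F(0)+\langle B,x\rangle\geq F(x)+\Delta\geq F(x)$.

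The perfect-hedge inequality then splits into two sub-cases. For $t\leq\sigma$ we need $V^\pi_\kappa(t)\geq Y(t)=F(S(t))/S_0(t)$; multiplying through by $S_0(t)\geq 1$ and using $F(0)\geq 0$ to bound $F(0)S_0(t)\geq F(0)$, the pivot inequality applied at $x=S(t)$ closes the estimate. For $t>\sigma$ (on $\{\sigma<T\}$) we need $V^\pi_\kappa(\sigma)\geq X(\sigma)=(F(S(\sigma))+\Delta)/S_0(\sigma)$; because $\sigma$ is the entry time into the closed set $D^c$, continuity of $S$ gives $S(\sigma)\in D^c$, so the sharper bound $F(0)+\langle B,S(\sigma)\rangle\geq F(S(\sigma))+\Delta$ is available, and the same monotonicity $S_0(\sigma)\geq 1$, $F(0)\geq 0$ delivers $V^\pi_\kappa(\sigma)\geq X(\sigma)$; maintaining the strategy in cash after $\sigma$ (equivalently, invoking the left-continuity convention discussed just before (\ref{2.15})) propagates this to all $t>\sigma$. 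The case $\sigma=T$ is automatic from the $t\leq\sigma$ analysis. The main obstacle is really only the pivotal inequality $F(0)+\langle B,x\rangle\geq F(x)$: it hinges on transferring minimality of $R$ in $\mathcal{G}$ back to $F$ via the explicit formula (\ref{2.20+}), everything else being an elementary algebraic manipulation using $S_0\geq 1$ and $F\geq 0$.
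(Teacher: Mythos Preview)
Your pivotal inequality $F(0)+\langle B,x\rangle\geq F(x)$ for \emph{all} $x\in\mathbb{R}^d_+$ is false, and so is the step you use to prove it, namely that $R(x)=F(0)+\langle B,x\rangle$ throughout $D=\{x:F(0)+\langle B,x\rangle<F(x)+\Delta\}$. Take the call option $F(x)=(x-K)^+$ with $\Delta<K$, so $B=\Delta/K$. For any $x>K^2/(K-\Delta)$ one has $F(x)=x-K>\Delta x/K=F(0)+Bx$, while at the same time $\Delta x/K<x-K+\Delta$ puts $x\in D$; and for such $x$ the explicit formula (\ref{2.20+}) gives $R(x)=x-K+\Delta\neq \Delta x/K$. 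The region on which $R$ is linear is $D_0=\{x:\|x\|<H(x)\}$, which is only one connected component of $D$ (the one containing the origin), and in general $D_0\subsetneq D$.

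The fix is to use continuity of $S$ and a connectivity argument: since $R(s)<F(s)+\Delta$ forces $s\in D_0$, and since $\partial D_0\subset D^c$ (because $g(t)=F(0)+t\langle B,x/\|x\|\rangle-F(tx/\|x\|)$ is continuous and the set $\{g\geq\Delta\}$ is closed), the continuous path $S(\cdot)$ cannot leave $D_0$ before hitting $D^c$. Hence for $t<\sigma$ one has $S(t)\in D_0$, where indeed $R(S(t))=F(0)+\langle B,S(t)\rangle\geq F(S(t))$, and your estimate goes through on this smaller region. This is exactly the missing idea the paper supplies, though by a different route: it works with the explicit simplex $\{\sum_{i\leq j}x_i/A_i<1\}$ and two direct convexity calculations (the relations (\ref{2+.20})) to show both that the pivot inequality holds there and that its boundary lies in $D^c$, so that $S(t)$ is trapped in it until $\sigma$. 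Your use of Lemma \ref{lem2+.3} is a clean alternative once the region is corrected from $D$ to $D_0$.
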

\begin{proof}
Without loss of
generality we assume that there exists $0 \leq j\leq d$ such
that $A_i<\infty$ if and only if $i\leq j$.
First we prove the following relations. For any $x\in\mathbb{R}^d_{+}$,
\begin{eqnarray}\label{2+.20}
&i. \ F(x)>F(0)+\langle x,B \rangle\Rightarrow
 \sum_{i=1}^j \frac{x_i}{A_i}>1. \\
&ii. \  \sum_{i=1}^j \frac{x_i}{A_i}=1\Rightarrow F(x)+\Delta\leq F(0)+\langle x,B \rangle.\nonumber
\end{eqnarray}
Indeed if $\sum_{i=1}^{j}\frac{x_i}{A_i}<1$ then
from the convexity of $F$ we obtain
\begin{eqnarray*}\label{2+.32+++}
&F(x)\leq \sum_{i=1}^j \frac{x_i}{A_i}F_i(A_i)+
\mathbb{I}_{j<d}\frac{1-\sum_{i=1}^j x_i/A_i}{d-j} \times\\
&\sum_{k=j+1}^d F_k\left(\frac{x_k(d-j)}{1-\sum_{i=1}^j x_i/A_i}\right)\leq
F(0)+\langle x,B \rangle.
\nonumber
\end{eqnarray*}
This proves (by contradiction) the first statement in (\ref{2+.20}).
Next, let $\sum_{i=1}^j \frac{x_i}{A_i}=1$.
Fix $0<\epsilon<1$. Consider the vector $y=(1-\epsilon) x$.
From the convexity of $F$ it follows that
\begin{eqnarray*}\label{2+.34}
&F(y)+\Delta\leq \sum_{i=1}^j \frac{y_i}{A_i}\big(F_i(A_i)+\Delta\big)+\mathbb{I}_{j<d}\frac{\epsilon}{d-j} \\
&\times\sum_{k=j+1}^d \big(F_k(y_k(d-j)/ \epsilon)+\Delta\big)\leq F(0)+\langle y, B \rangle+\epsilon\Delta\nonumber
\end{eqnarray*}
and by letting $\epsilon\downarrow 0$ we obtain the second statement in (\ref{2+.20}).
Now, we are ready to prove the lemma. Let $(\pi,\sigma)$ be the hedge which is given by
(\ref{2.21})--(\ref{2.22}).
If $R(s)=F(s)+\Delta$ then the statement is trivial. Assume that $R(s)<F(s)+\Delta$, then
from (\ref{2+.20}) we get
\begin{equation}\label{2+.35}
\sum_{i=1}^j \frac{s_i}{A_i}<1.
\end{equation}
Let $t\in [0,T]$. Clearly, on the event $\sigma<t$
we have
\begin{eqnarray*}
&V^\pi(\sigma)=F(0)+\langle \tilde{S}(\sigma),B\rangle\geq\frac{1}{S_0(\sigma)}
\left(F(0)+\langle S(\sigma),B\rangle\right)\\
&\geq \frac{1}{S_0(\sigma)}\left(\Delta+F(S(\sigma))\right)=H(\sigma,t).
\end{eqnarray*}
Consider the event $t\leq\sigma$. From (\ref{2+.35}) and the second statement
in (\ref{2+.20})
it follows that for any $v<\sigma$,
$\sum_{i=1}^j \frac{S_i(v)}{A_i}<1$.
Thus by applying the first statement in (\ref{2+.20})
we get
\begin{equation*}
V^\pi(t)\geq\frac{1}{S_0(t)}\big(F(0)+\langle S(t),B\rangle\big)\geq\frac{1}{S_0(t)} F(S(t))=H(\sigma,t).
\end{equation*}
Since $t$ was arbitrary the proof is completed.
\end{proof}
From Lemmas \ref{lem2+.2}--\ref{lem2+.4} we obtain the following
Corollary.
\begin{cor}\label{cor2+.1}
For any $s\in\mathbb{R}^d_{+}$,
\begin{equation}\label{2.37}
\mathbb{V}(s)\geq R(s)\geq\hat{V}(s).
\end{equation}
Furthermore, the hedge $(\pi,\sigma)$ which is defined according to
(\ref{2.21})--(\ref{2.22}) is a perfect hedge.
\end{cor}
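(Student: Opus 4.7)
The corollary is essentially a bookkeeping consequence of the three preceding lemmas, so the plan is simply to assemble them and observe that no further work is required.

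First I would establish the upper bound in the chain, namely $\mathbb{V}(s) \geq R(s)$. By Lemma \ref{lem2+.2}, the value function $\mathbb{V}$ of the robust optimal stopping problem belongs to the class $\mathcal{G}$. By Lemma \ref{lem2+.3}, $R$ is the pointwise minimal element of $\mathcal{G}$, so in particular $\mathbb{V}(s) \geq R(s)$ for every $s \in \mathbb{R}^d_{+}$. This gives the left inequality of (\ref{2.37}).

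Next I would obtain $R(s) \geq \hat{V}(s)$ directly from the definition (\ref{2.15+}) of $\hat{V}$. Given the initial stock position $s$, Lemma \ref{lem2+.4} exhibits an explicit trivial hedge $(\pi, \sigma)$ (namely the one given by (\ref{2.21}) in the case $R(s) < F(s) + \Delta$, or by (\ref{2.22}) in the case $R(s) = F(s) + \Delta$) whose initial capital equals $R(s)$ and which is perfect. Since $\hat{V}(s)$ is defined as the infimum of initial capitals of perfect trivial hedges, the existence of this particular hedge forces $\hat{V}(s) \leq R(s)$. This gives the right inequality of (\ref{2.37}), and the same hedge is the one claimed to be perfect in the second assertion of the corollary.

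Thus the proof reduces to citing Lemmas \ref{lem2+.2}, \ref{lem2+.3} and \ref{lem2+.4} in the appropriate order; no additional argument is needed. There is no main obstacle here — the substantial work (concavity of $\mathbb{V}$, minimality of $R$ in $\mathcal{G}$, and the verification that the explicit buy-and-hold strategy dominates the payoff under the chosen hitting time) has already been carried out in the three lemmas. The corollary is stated separately only because it is the precise form in which the inequality $\mathbb{V}(s) \geq R(s) \geq \hat{V}(s)$ and the candidate optimal hedge will be fed into the remainder of the proof of Theorem \ref{thm2.1}, where the reverse inequality $V_{\kappa}(s) \geq \mathbb{V}(s)$ will be established via the consistent price systems of Section \ref{sec3} and the extended weak convergence machinery.
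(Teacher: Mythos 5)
Your proof is correct and matches the paper's own reasoning, which simply cites Lemmas \ref{lem2+.2}, \ref{lem2+.3} and \ref{lem2+.4} without further argument: $\mathbb{V}\in\mathcal{G}$ plus minimality of $R$ in $\mathcal{G}$ gives $\mathbb{V}(s)\geq R(s)$, and the explicit trivial perfect hedge with initial capital $R(s)$ gives $R(s)\geq\hat{V}(s)$.
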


\section{Optimal Stopping and Price Consistent Systems}\label{sec5}\setcounter{equation}{0}
Let $\mathbb{M}_d$ be the space
of $d\times d$ matrices with the operator norm
$||\mathcal{A}||=\sup_{||v||=1}||\mathcal{A}(v)||$. We denote by $\hat{I}$ the unit matrix.
For any $i\leq d$, let $e_i:=(0,...,0,1,0,...,0)$ be the
unit vector where $1$ is in the
$i$--th place.
For a matrix $\mathcal{A}\in\mathbb{M}_d$ and a vector $x\in\mathbb{R}^d$
we denote by $\mathcal{A}\cdot x$ the matrix multiplication between $\mathcal{A}$
and the column vector $x$.
First, we review some basic concepts from the weak convergence theory.
For any $c\grave{a}dl\grave{a}g$
stochastic process ${\{X(t)\}}_{t=0}^T$ with values in some Euclidean space
$\mathbb{R}^m$, denote by $\mathbb{P}^X$ the distribution of $X$ on the canonical
space $\mathbb{D}([0,T];\mathbb{R}^m)$ equipped with the Skorohod
topology (for details see \cite{B2}) i.e. for any Borel set $D\subset
\mathbb{D}([0,T];\mathbb{R}^m)$, $\mathbb{P}^X(D)=\mathbb{P}\{X\in{D}\}$.
The usual filtration which is generated by the process $X$ will be denoted by
${\{\mathcal{F}^X_t\}}_{t=0}^T$.
For
a sequence of ($\mathbb{R}^m$ valued) stochastic processes
$X^{(n)}$ we will
use the notation $X^{(n)}\Rightarrow{X}$ to indicate that the
probability measures $\mathbb{P}^{X^{(n)}}$, $n\geq{1}$ converge
weakly to $\mathbb{P}^X$
on the space $\mathbb{D}([0,T];\mathbb{R}^m)$. For convergence of optimal stopping values
we will need
a stronger form of convergence, than the standard weak convergence.
This form is called "extended weak convergence" and was introduced in \cite{A}.
In \cite{A} Aldous introduced the notion of "extended weak
convergence" via prediction processes. He showed that the original condition
is equivalent to a more elementary condition which does
not require the use of prediction processes (see \cite{A}, Proposition
16.15). We will use the latter condition as a definition.
\begin{dfn}\label{dfn5.1}
A sequence of $X^{(n)}$, $n\in\mathbb{N}$
extended weak converges to a stochastic process $X$ if for any
continuous bounded functions
$\psi_1,...,\psi_k\in{C(\mathbb{D}([0,T];\mathbb{R}^d))}$
\begin{equation*}
(X^{(n)},Z^{(n,1)},...,Z^{(n,k)})\Rightarrow (X,Z^{(1)},...,Z^{(k)})
\end{equation*}
on the space $\mathbb{D}([0,T];\mathbb{R}^{d+k})$,
where for any $t\leq{T}$, $1\leq i\leq{k}$ and $n\in\mathbb{N}$
\begin{equation*}
Z^{(n,i)}_t=\mathbb{E}_n(\psi_i(X^{(n)})|\mathcal{F}^{X^{(n)}}_t),
 \ \mbox{and} \ Z^{(i)}=\mathbb{E}(\psi_i(X)|\mathcal{F}^X_t)
\end{equation*}
$\mathbb{E}_n$ denotes the expectation
on the probability space on which $X^{(n)}$ is defined
and $\mathbb{E}$ denotes the expectation on the probability space on
which $X$ is defined. We will denote extended weak
convergence by $X^{(n)}\Rrightarrow X$.
\end{dfn}

Next, consider the Brownian
probability space $(\Omega_W,\mathcal{F}^W,P^W)$.
Let $\mathcal{L}$ be the set of all $\mathbb{M}_d$ valued
adapted processes (to the Brownian filtration)
$\alpha={\{\alpha_{ij}(t)\}}_{1\leq i,j\leq d, t\in [0,T]}$,
given by $\alpha(t)=f(t,W)$, $t\in [0,T]$
where $f=f:[0,T]\times C([0,T];\mathbb{R}^d)\rightarrow\mathbb{M}_d$
is a continuous bounded function and satisfies
$f(t,x)=f(t,y)$ if $x(u)=y(u)$ for any $u\in [0,t]$.
The above condition guarantees that $\alpha$ is an adapted
(to the Brownian filtration) process.
Observe that we consider $W$ as a random element
in $C([0,T];\mathbb{R}^d)$.
Finally denote by
 $\Gamma_b(x)\subset \Gamma(x)$ as a set of all Brownian martingales
 $M={\{M_1(t),...,M_d(t)\}}_{t=0}^T\in\Gamma(x)$ of the form
\begin{equation*}\label{5.2}
M_i(t)=x_i\exp\left(\int_{0}^t \sum_{j=1}^d \alpha_{ij}(u)dW_j(u)-\frac{1}{2}\int_{0}^t
\sum_{j=1}^d \alpha^2_{ij}(u)du\right).
\end{equation*}
Next, let $\mathbb{A}$ be a $(d+1)\times (d+1)$ orthogonal matrix such that its last
column equals to $(\frac{1}{\sqrt{d+1}},...,\frac{1}{\sqrt{d+1}})^{*}$.
Let $\Omega_{\xi}={\{1,2,...,d+1\}}^\infty$ be the space of infinite
sequences $\omega=(\omega_1,\omega_2,...)$;
$\omega_i\in{\{1,2,...,d+1\}}$ with the product probability
$P^{\xi}={\{\frac{1}{d+1},...,\frac{1}{d+1}\}}^\infty$. Define a
sequence of i.i.d. random vectors $\xi(1),\xi(2),...$ by
$\xi(i)(\omega)=\sqrt{d+1}(A_{\omega_i 1},A_{\omega_i
2}...,A_{\omega_i d})$, $i\in\mathbb{N}$. Denote by
$\mathcal{T}_n$ the set of all stopping times with respect to the filtration
generated by $\xi(i)$, $i\in\mathbb{N}$, with values in the set
$\{0,1,...,n\}$.
Notice that the random vectors $\xi(k)$, $k\in\mathbb{N}$
have mean zero and a covariance matrix which is equals to $\hat{I}$.
Choose $\alpha(\cdot)=f(\cdot,W)\in\mathcal{L}$.
Let $\lambda_n\downarrow 0$ be a sequence such that
for any $0\leq k\leq n$ the matrix
$\lambda_n\hat{I}+f\left(kT/n,\sqrt\frac{T}{n}\sum_{i=1}^k \xi(i)\right)$
is non--singular $P^{\xi}$ a.s.
 Clearly, there exists such sequence since
the set of all eigenvalues of matrices of the form
$f\left(kT/n,\sqrt\frac{T}{n}\sum_{i=1}^k \xi(i)\right)$
is countable.
For any $n\in\mathbb{N}$ define
the martingale ${\{M^{(n)}(k)\}}_{k=0}^n$
by $M^{(n)}(0)=s$, and for $k<n$
\begin{eqnarray*}\label{5.4}
&M^{(n)}_i(k+1)=M^{(n)}_i(k)\bigg(1+\sqrt\frac{T}{n}\bigg\langle\lambda_n e_i+\\
&f_i\left(kT/n,\sqrt\frac{T}{n}\sum_{i=1}^k \xi(i)\right),\xi(k+1)\bigg\rangle\bigg), \ \ i=1,...,d \nonumber
\end{eqnarray*}
where $f_i$ is the $i$--th row of the matrix $f$.
We assume that $n$ is sufficiently large such that $M^{(n)}$
takes on values in $\mathbb{R}^d_{++}$. Recall that $s=(s_1,...,s_d)$ is the initial stock position.
\begin{lem}\label{lem5.1}
Set, $W^{(n)}(t)=\sqrt\frac{T}{n}\sum_{i=1}^{[nt/T]} \xi(i)$, $t\in [0,T]$.
We have
\begin{equation}\label{5.6}
\left(W^{(n)}(t),M^{(n)}([nt/T])\right)_{t=0}^T \Rightarrow \left(W(t),M(t)\right)_{t=0}^T
\end{equation}
on the space $\mathbb{D}([0,T];\mathbb{R}^d)\times\mathbb{D}([0,T];\mathbb{R}^d)$ (with the product topology).
\end{lem}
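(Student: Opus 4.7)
The plan is to reduce the joint convergence to two ingredients: the convergence of the random walk $W^{(n)}$ and a stochastic-exponential representation for $M^{(n)}$. Since the i.i.d.\ vectors $\xi(k)$ are bounded with mean zero and covariance $\hat I$, Donsker's invariance principle gives $W^{(n)} \Rightarrow W$ on $\mathbb{D}([0,T];\mathbb{R}^d)$; because $W$ has continuous paths and the jumps of $W^{(n)}$ are uniformly of order $\sqrt{T/n}$, the convergence in fact holds in the uniform topology. Invoking the Skorokhod representation theorem (and passing to a subsequence if needed), I may assume $W^{(n)} \to W$ uniformly on $[0,T]$ almost surely.

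Next, I take logarithms in the multiplicative definition of $M^{(n)}$. Using $\log(1+x)=x-x^2/2+O(x^3)$ together with the uniform boundedness of $f$ and of $\xi$, I obtain the decomposition
\[
\log M_i^{(n)}([nt/T]) - \log s_i \;=\; I_n^{(i)}(t) \;-\; \tfrac12 Q_n^{(i)}(t) \;+\; R_n^{(i)}(t),
\]
with
\[
I_n^{(i)}(t)=\sum_{k=0}^{[nt/T]-1}\sqrt{T/n}\,\langle\be_i^{(n)}(k),\xi(k+1)\rangle,\qquad
Q_n^{(i)}(t)=\sum_{k=0}^{[nt/T]-1}\tfrac{T}{n}\langle\be_i^{(n)}(k),\xi(k+1)\rangle^{2},
\]
where $\be_i^{(n)}(k)=\la_n e_i+f_i(kT/n,W^{(n)})$ is adapted, by the non-anticipative assumption on $f$. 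The remainder $R_n^{(i)}$ is controlled in sup-norm by a constant times $(T/n)^{1/2}$ and thus tends to zero.

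The identification step compares the three pieces with their continuous counterparts. Under the a.s.\ uniform convergence $W^{(n)}\to W$ and the continuity of $f$, we have $\be_i^{(n)}([nu/T])\to f_i(u,W)$ uniformly in $u$; moreover $\la_n\downarrow 0$. A martingale law-of-large-numbers argument (using $\mathbb{E}[\xi\xi^{\top}]=\hat I$ together with boundedness of $\xi$ and $f$) yields $Q_n^{(i)}(t)\to\int_0^t\sum_j f_{ij}^2(u,W)\,du$ uniformly in $t$, in probability. For $I_n^{(i)}$, I would appeal to a Kurtz--Protter type theorem on convergence of stochastic integrals to conclude that $I_n^{(i)}(\cdot)\to\int_0^{\cdot}\sum_j f_{ij}(u,W)\,dW_j(u)$ uniformly, in probability. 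Exponentiating and using the continuous mapping theorem yields $M^{(n)}([n\cdot/T])\to M$ jointly with $W^{(n)}\to W$, which is exactly (\ref{5.6}).

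The main obstacle is precisely this last convergence of discrete stochastic integrals: the integrand $f_i(\cdot,W^{(n)})$ is itself a functional of the approximating integrator, so naive passage to the limit is not available. However, the Kurtz--Protter criterion applies because (a) the integrators $W^{(n)}$ are martingales whose predictable quadratic variations converge to $t\hat I$, giving uniform tightness of the total variation of the bracket, (b) the integrands converge uniformly in probability by continuity of $f$ and the Skorokhod representation, and (c) the limit integrand $f(\cdot,W)$ is adapted to $\{\cF^W_t\}$. All remaining steps---tightness of the joint law, the martingale LLN for $Q_n^{(i)}$, and the negligibility of $R_n^{(i)}$---are elementary moment-based estimates.
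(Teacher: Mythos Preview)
Your approach is correct in its main lines but takes a different route from the paper's. The paper never takes logarithms: it observes that $\hat M^{(n)}(t):=M^{(n)}([nt/T])$ is the Dol\'eans exponential of the discrete integral $Y^{(n)}(t):=\int_0^t(\la_n\hat I+f(u,W^{(n)}(u-)))\cdot dW^{(n)}(u)$, i.e.\ the unique solution of $\hat M_i^{(n)}(t)=s_i+\int_0^t \hat M_i^{(n)}(u-)\,dY_i^{(n)}(u)$, while $M$ solves the analogous SDE driven by $Y(t)=\int_0^t f(u,W)\cdot dW(u)$. It then invokes two results of Duffie--Protter: Theorem~4.1 (a Kurtz--Protter statement) yields $(W^{(n)},Y^{(n)})\Rightarrow(W,Y)$, and Theorem~4.4 (stability of SDE solutions under weak convergence of drivers) upgrades this to $(W^{(n)},\hat M^{(n)})\Rightarrow(W,M)$. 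Your quadratic piece $Q_n^{(i)}$ and remainder $R_n^{(i)}$ are thus absorbed into the SDE machinery and need no separate treatment. Both arguments rest on the same Kurtz--Protter core; the paper's is shorter because it outsources the It\^o correction to the SDE-stability theorem, whereas your log-expansion makes that correction explicit and handles it by a martingale law of large numbers.

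One technical point to tighten in your version: after Skorokhod-representing \emph{only} $W^{(n)}\to W$ a.s., you cannot conclude that $I_n^{(i)}\to\int_0^{\cdot}\sum_j f_{ij}(u,W)\,dW_j(u)$ \emph{in probability}, since the It\^o integral is not a pathwise-continuous functional of the driver. What Kurtz--Protter delivers is the joint \emph{weak} convergence $(W^{(n)},I_n^{(i)})\Rightarrow(W,\int f_i\,dW)$. Either combine this directly, in distribution, with the in-probability convergence of $Q_n^{(i)}$ and $R_n^{(i)}$ via a Slutsky-type argument, or apply the Skorokhod representation to the full joint law \emph{after} the distributional convergence of all pieces has been established. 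With that adjustment your proof goes through.
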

\begin{proof}
Define the ($\mathbb{R}^d$ valued) processes
$Y(t):=\int_{0}^t f(u,W(u-))\cdot dW(u)$ and
$Y^{(n)}(t)=\int_{0}^t \left(\lambda_n \hat{I}+f(u,W^{(n)}(u-))\right) \cdot dW^{(n)}(u)$,
$t\in [0,T]$, $n\in\mathbb{N}$.
From \cite{H} it follows that $W^{(n)}\Rightarrow W$ on the space
$\mathbb{D}([0,T];\mathbb{R}^d)$. This together with Theorem 4.1
in \cite{DP} yields that $(W^{(n)},Y^{(n)})\Rightarrow (W,Y)$
on the space
$\mathbb{D}([0,T];\mathbb{R}^d)\times\mathbb{D}([0,T];\mathbb{R}^d)$.
Next, observe that the stochastic process
$\hat{M}^{(n)}(t):=M^{(n)}([nt/T])$, $t\in [0,T]$ is the unique solution of the SDE,
$\hat{M}^{(n)}_i(t)=s_i+\int_{0}^t \hat{M}^{(n)}_i(u-)dY^{(n)}_i(u)$, $i\leq d$
and ${\{M(t)\}}_{t=0}^T$ is the unique solution of the SDE,
$M_i(t)=s_i+\int_{0}^t M_i(u)dY_i(u)$, $i\leq d$.
Thus by applying Theorem 4.4 in \cite{DP} we obtain (\ref{5.6}).
\end{proof}
Next, we use the extended weak convergence theory in order to
treat optimal stopping values.
\begin{lem}\label{lem5.2}
For any $\delta>0$ there exists an absolutely continuous $\delta$--consistent price
system $(\hat{S},Q)$ which satisfies
\begin{equation}\label{5.7}
\inf_{\sigma\in\mathcal{T}_{[0,T]}}E_Q\left(F(\hat{S}(\sigma))+\Delta \mathbb{I}_{\sigma<T}\right)\geq
\inf_{\sigma\in\mathcal{T}^W_{[0,T]}}E^W\left(F(M(\sigma))+\Delta \mathbb{I}_{\sigma<T}\right)-\delta.
\end{equation}
\end{lem}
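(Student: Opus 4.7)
My plan is as follows. Fix $\delta>0$; I would construct $(\hat S,Q)$ as the absolutely continuous $\delta$-consistent price system produced by Lemma \ref{lem3.0} applied to a discrete-time approximation $M^{(n)}$ of $M$, and then relate the resulting optimal stopping value to the Brownian one via Aldous's extended weak convergence theory. For the input to Lemma \ref{lem3.0}, take the martingale $M^{(n)}$ defined in the paragraph preceding Lemma \ref{lem5.1}: its conditional increments are supported on $d+1$ atoms with $0$ in the interior of the convex hull (by the orthogonal-matrix construction), and for $n$ large they are of order $n^{-1/2}$, so conditions (i)--(iii) preceding Lemma \ref{lem3.0} hold with tolerance $\delta$. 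Lemma \ref{lem3.0} then produces $(\hat S^{(n)},Q^{(n)})$ such that the $Q^{(n)}$-law of $(\hat S^{(n)}(kT/n))_{k=0}^n$ matches that of $(M^{(n)}(k))_{k=0}^n$, together with the filtration-equivalence identity (\ref{3.20+-++}).

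Next I would bound the LHS of (\ref{5.7}) from below by a discrete optimal stopping value for $M^{(n)}$. On each subinterval $[kT/n,(k+1)T/n]$ the process $\hat S^{(n)}$ is a $Q^{(n)}$-martingale whose terminal value $\tilde M(k+1)$ differs from $\tilde M(k)$ by $O(n^{-1/2})$, so by the Lipschitz continuity of $F$, for any stopping time $\sigma\in[kT/n,(k+1)T/n]$ one has $|F(\hat S^{(n)}(\sigma))-F(\hat S^{(n)}(kT/n))|=O(n^{-1/2})$ almost surely. Consequently the LHS of (\ref{5.7}) is within $O(n^{-1/2})$ of the infimum taken only over grid-valued stopping times in $\mathcal F$. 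Iterating the identity (\ref{3.20+-++}) across grid steps shows, via a Snell-envelope argument, that this grid-time infimum equals the corresponding infimum in the filtration generated by $\hat S^{(n)}$ at grid times, and the distributional match with $M^{(n)}$ then identifies it with
\begin{equation*}
\inf_{\tau\in\mathcal T_n}E^\xi\bigl(F(M^{(n)}(\tau))+\Delta\mathbb I_{\tau<n}\bigr).
\end{equation*}

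Finally I would pass to the Brownian limit. Lemma \ref{lem5.1} gives $(W^{(n)},M^{(n)}([n\cdot/T]))\Rightarrow(W,M)$, and since $M$ is a continuous functional of $W$ via the continuous coefficient $f$, a direct check of the prediction processes upgrades this to extended weak convergence $M^{(n)}\Rrightarrow M$ in the sense of \cite{A}. Aldous's continuity theorem for optimal stopping values then yields
\begin{equation*}
\lim_{n\to\infty}\inf_{\tau\in\mathcal T_n}E^\xi\bigl(F(M^{(n)}(\tau))+\Delta\mathbb I_{\tau<n}\bigr)=\inf_{\tau\in\mathcal T^W_{[0,T]}}E^W\bigl(F(M(\tau))+\Delta\mathbb I_{\tau<T}\bigr).
\end{equation*}
Choosing $n$ large enough that the $O(n^{-1/2})$ truncation error together with the remaining convergence gap fall below $\delta$ yields (\ref{5.7}) with $(\hat S,Q)=(\hat S^{(n)},Q^{(n)})$.

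The principal technical obstacle is the discontinuity of $\mathbb I_{\cdot<T}$ at the terminal time, which falls outside the standard hypotheses of Aldous's theorem. I would sidestep this by approximating $\mathbb I_{\cdot<T}$ from below by continuous cutoffs $\varphi_k\uparrow\mathbb I_{[0,T)}$; the resulting optimization produces only a lower bound on the Brownian infimum, which is precisely the direction demanded by the one-sided inequality (\ref{5.7}), and by monotone convergence this bound sharpens to the exact value as $k\to\infty$. A secondary delicate step is verifying that the filtration-equivalence identity (\ref{3.20+-++}) can indeed be iterated to equate optimal stopping over $\mathcal F$ and over the $\hat S^{(n)}$-filtration at grid times, but this is a routine consequence of dynamic programming once the Markovian conditional law is in hand.
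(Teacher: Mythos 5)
Your proposal is correct and follows essentially the same route as the paper: take the discrete martingale $M^{(n)}$ built from the i.i.d.\ vectors $\xi(i)$, feed it into Lemma~\ref{lem3.0} to obtain an absolutely continuous $\delta$--consistent price system whose marginal law at grid times matches $M^{(n)}$ and which enjoys the Markovian filtration identity (\ref{3.20+-++}), project general stopping times onto the grid using the uniform $O(\epsilon)$ closeness of $\hat S$ across subintervals and the Lipschitz continuity of $F$, identify the grid-time Snell value with the discrete optimal-stopping value of $M^{(n)}$, and finally pass to the Brownian limit via Aldous's extended weak convergence and Theorem~17.2 of \cite{A}. The only substantive point of divergence is how extended weak convergence is obtained and how the terminal indicator is handled. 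The paper gets $W^{(n)}\Rrightarrow W$ essentially for free from Proposition~20.18 of \cite{A}, because $W^{(n)}$ has independent increments, and then passes to the pair $(W^{(n)},M^{(n)})$ since $M^{(n)}$ is adapted to $\sigma(W^{(n)})$; your ``direct check of prediction processes'' is a heavier route to the same conclusion and should be replaced by that citation. As for the indicator, the paper applies Theorem~17.2 directly to the payoff $F(M(\cdot))+\Delta\mathbb{I}_{\cdot<T}$ without the cutoff layer; your monotone-cutoff device is a legitimate safety net in the one-sided direction you need, but note that the final step (that $\inf_\sigma E^W(F(M(\sigma))+\Delta\varphi_k(\sigma))$ actually tends to $\inf_\sigma E^W(F(M(\sigma))+\Delta\mathbb{I}_{\sigma<T})$ rather than to something strictly smaller) is not pure monotone convergence of integrals: one has to push an $\epsilon$-optimal stopping time $\sigma_k$ for the cutoff problem up to $T$ on the event $\{\sigma_k>T-1/k\}$ and use path-continuity of $M$ plus dominated convergence to control $F$, which deserves to be spelled out. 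With those two points tightened, your argument reproduces the paper's proof.
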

\begin{proof}
Let $\delta>0$. The processes $W$ and $W^{(n)}$, $n\in\mathbb{N}$
are processes with independent increments and so, from Proposition 20.18 in \cite{A}
the usual weak convergence
$W^{(n)}\Rightarrow W$ implies extended weak convergence
$W^{(n)}\Rrightarrow W$. Since for any $n$
the process ${\{M^{(n)}([nt/T])\}}_{t=0}^T$ is adapted to the filtration generated by
$W^{(n)}$ we get from Lemma \ref{lem5.1} that
$\left(W^{(n)}(t),M^{(n)}([nt/T])\right)_{t=0}^T$
$\Rrightarrow (W(t),M(t))_{t=0}^T$.
Now, that we established extended weak convergence, we apply
Theorem 17.2 in \cite{A} and obtain
\begin{eqnarray}\label{5.8}
&\lim_{n\rightarrow\infty}\min_{\sigma\in\mathcal{T}_n} E^\xi \left(F(M^{(n)}(\sigma))+\Delta \mathbb{I}_{\sigma<n}\right)\\
&=\inf_{\sigma\in\mathcal{T}^{W}_{[0,T]}} E^W\left(F(M(\sigma))+\Delta \mathbb{I}_{\sigma<T}\right)\nonumber
\end{eqnarray}
where $E^\xi$ is the expectation with respect to $P^\xi$.
Assume that $F$ is Lipschizt
continuous with a constant $\tilde{L}$, namely
$|F(y)-F(z)|\leq \tilde{L}\sum_{i=1}^d |y_i-z_i|.$
Observe that for sufficiently large $n$ the martingale $M^{(n)}$ satisfies
the three conditions before
Lemma \ref{lem3.0}, where for the third condition we take
$\epsilon:=\frac{\delta}{1+\tilde{L} \sum_{i=1}^d s_i}$. Thus from (\ref{5.8}) it follows that
we can choose $N$ which satisfies the above and the inequality
\begin{equation}\label{5.8+}
\min_{\sigma\in\mathcal{T}_N} E^\xi \left(F(M^{(N)}(\sigma))+\Delta \mathbb{I}_{\sigma<N}\right)>
\inf_{\sigma\in\mathcal{T}_{[0,T]}} E^W\left(F(M(\sigma))+\Delta \mathbb{I}_{\sigma<T}\right)-\delta/2.
\end{equation}
From Lemma \ref{lem3.0}
we obtain that there exists an absolutely continuous $\epsilon$--price
consistent
system $(Q,\hat{S})$ which satisfies this lemma
for the martingale $M^{(N)}$.
Denote by $\mathcal{T}^N\subset \mathcal{T}_{[0,T]}$
the set of stopping times with values in the set
$\{0,T/N,2T/N,...,T\}$.
Observe that the fact that $M^{(N)}$ satisfies the second
condition before Lemma 3.2 implies that the filtration which is generated by
$M^{(N)}$ coincides with the filtration generated by $W^{(N)}$. Thus
from the standard dynamical programming for
optimal stopping (see \cite{PS} Chapter I)
and the equality (\ref{3.20+-++}) we obtain
\begin{equation}\label{5.9}
\inf_{\sigma\in\mathcal{T}^N}
E_Q\left(F(\hat{S}(\sigma))+\Delta \mathbb{I}_{\sigma<T}\right)=
\min_{\sigma\in\mathcal{T}_N} E^\xi \left(F(M^{(N)}(\sigma))+\Delta \mathbb{I}_{\sigma<n}\right).
\end{equation}
Next, for any stopping time $\sigma\in\mathcal{T}_{[0,T]}$
define the stopping time
$\phi_N(\sigma)=\min\{k|kT/N\geq\sigma\}\frac{T}{N}\in\mathcal{T}^N$.
Similarly to (\ref{3.24++})
we obtain $|\hat{S}_i(\sigma)-\hat{S}_i(\phi_n(\sigma))|$
$\leq \epsilon \hat{S}^{(n)}_i(\sigma)/3$, $i=1,...,d$.
This together with (\ref{5.8+})--(\ref{5.9}) yields
\begin{eqnarray*}\label{5.10}
&\inf_{\sigma\in\mathcal{T}_{[0,T]}}E_Q\left(F(\hat{S}(\sigma))+\Delta \mathbb{I}_{\sigma<T}\right)\geq\\
&\inf_{\sigma\in\mathcal{T}_{[0,T]}}E_Q\left(F(\hat{S}(\phi_n(\sigma)))+\Delta \mathbb{I}_{\phi_n(\sigma)<T}\right)
-\epsilon \tilde{L} \sum_{i=1}^d s_i/3=\\
&\inf_{\sigma\in\mathcal{T}^N}E_Q\left(F(\hat{S}(\sigma))+\Delta \mathbb{I}_{\sigma<T}\right)
-\epsilon \tilde{L} \sum_{i=1}^d s_i/3\geq\\
&\inf_{\sigma\in\mathcal{T}^W_{[0,T]}}E^W\left(F(M(\sigma))+\Delta \mathbb{I}_{\sigma<T}\right)-\delta.
\end{eqnarray*}
\end{proof}
By using standard density arguments
it follows that $\Gamma_b(x)$ is dense in $\Gamma(x)$.
Namely, for any $M\in\Gamma(x)$ there exists a sub--sequence
${\{M^{(n)}\}}_{n=1}^\infty\subset\Gamma_b(x)$ such that
$$\lim_{n\rightarrow\infty}E^W\left(\sup_{0\leq t\leq T} ||M^{(n)}(t)-M(t)||\right)=0.$$
Thus from Lemma \ref{2+.0} we obtain that for any $u<T$
$$\mathbb{V}(x)=\sup_{M\in\Gamma_b(x)}\inf_{\sigma\in\mathcal{T}^W_{[0,u]}}
 E^W \left(F(M(\sigma))+\Delta\mathbb{I}_{\sigma<u}\right), \ \ x\in\mathbb{R}^d_{++}.$$
Next, we notice that if, in
formula (\ref{5.7}) we put some $u\in [0,T]$ instead of $T$ then Lemma \ref{lem5.2}
still remains true (and can be proved in a similar way).
In view of Lemma 4.1, we arrive at the following Corollary.
\begin{cor}\label{cor5.1}
For any $\epsilon>0$ and $u<T$ there exists an absolutely continuous $\epsilon$--price consistent
system $(Q,\hat{S})$ which satisfies
\begin{equation}\label{5.10}
\inf_{\sigma\in\mathcal{T}_{[0,u]}}E_Q\left(F(\hat{S}(\sigma))+\Delta \mathbb{I}_{\sigma<u}\right)\geq \mathbb{V}(s)-\epsilon
\end{equation}
where $\mathcal{T}_{[0,u]}$ is the set of all stopping times with respect
to the filtration ${\{\mathcal{F}_t\}}_{t=0}^T$ with values in the interval
$[0,u]$.
\end{cor}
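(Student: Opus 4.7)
The plan is to combine the two ingredients the author has just laid out: the density of $\Gamma_b(s)$ in $\Gamma(s)$ (together with Lemma \ref{lem2+.0}) and a time-horizon variant of Lemma \ref{lem5.2}. Concretely, the display above the Corollary already gives, for any $u<T$,
\begin{equation*}
\mathbb{V}(s)=\sup_{M\in\Gamma_b(s)}\inf_{\sigma\in\mathcal{T}^W_{[0,u]}} E^W\!\left(F(M(\sigma))+\Delta\mathbb{I}_{\sigma<u}\right),
\end{equation*}
so the supremum in the definition of $\mathbb{V}(s)$ can be realised, up to any prescribed accuracy, by a martingale of the exponential form that the discretisation scheme of Section \ref{sec5} can handle.

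Fix $\epsilon>0$ and $u<T$. First, by the displayed identity I would select $M\in\Gamma_b(s)$ such that
\begin{equation*}
\inf_{\sigma\in\mathcal{T}^W_{[0,u]}} E^W\!\left(F(M(\sigma))+\Delta\mathbb{I}_{\sigma<u}\right)\geq \mathbb{V}(s)-\epsilon/2.
\end{equation*}
Second, I would invoke the $u$-horizon analogue of Lemma \ref{lem5.2}, applied to this particular $M\in\Gamma_b(s)$, to produce an absolutely continuous $(\epsilon/2)$-consistent price system $(\hat S,Q)$ satisfying
\begin{equation*}
\inf_{\sigma\in\mathcal{T}_{[0,u]}} E_Q\!\left(F(\hat S(\sigma))+\Delta\mathbb{I}_{\sigma<u}\right)\geq \inf_{\sigma\in\mathcal{T}^W_{[0,u]}} E^W\!\left(F(M(\sigma))+\Delta\mathbb{I}_{\sigma<u}\right)-\epsilon/2.
\end{equation*}
Chaining the two inequalities and noting that every $(\epsilon/2)$-consistent system is in particular $\epsilon$-consistent delivers (\ref{5.10}).

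The only non-routine point is justifying the horizon reparametrisation of Lemma \ref{lem5.2}. The proof of that lemma goes through three ingredients: the weak-convergence statement of Lemma \ref{lem5.1}, the extended weak convergence of random walks to Brownian motion (Aldous' Proposition 20.18 and Theorem 17.2 in \cite{A}) together with the passage to the limit in optimal stopping values, and the dynamic-programming identity (\ref{5.9}) combined with Lemma \ref{lem3.0}. Each of these statements is insensitive to the particular length of the time window, so replacing $T$ by $u$ throughout---reading $M^{(n)}$ as a discrete martingale indexed by $\{0,\dots,n\}$ with step $u/n$, and using $\mathcal{T}_{[0,u]}$, $\mathcal{T}^W_{[0,u]}$ as the relevant stopping-time classes---only amounts to a time rescaling. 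The one bookkeeping item to mind is the Lipschitz slack incurred when passing from $\hat S(\sigma)$ to $\hat S(\phi_N(\sigma))$; this is why I pick the tolerance in the CPS construction to be $\tfrac{\epsilon}{2(1+\tilde L\sum_i s_i)}$, exactly as in the proof of Lemma \ref{lem5.2}, so that the induced error is at most $\epsilon/2$ and the total loss is $\epsilon$. No genuine new obstacle arises beyond this bookkeeping.
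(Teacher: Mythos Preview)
Your proposal is correct and follows essentially the same route as the paper: the paragraph preceding the Corollary already records the density of $\Gamma_b(s)$ in $\Gamma(s)$, the identity $\mathbb{V}(s)=\sup_{M\in\Gamma_b(s)}\inf_{\sigma\in\mathcal{T}^W_{[0,u]}}E^W(F(M(\sigma))+\Delta\mathbb{I}_{\sigma<u})$, and the observation that Lemma \ref{lem5.2} goes through verbatim with $u$ in place of $T$; your $\epsilon/2$--$\epsilon/2$ splitting and the explicit choice of CPS tolerance simply spell out the constants that the paper leaves implicit.
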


\section{Proof of Main  results}\label{sec4}\setcounter{equation}{0}
In this Section we complete the proof of
Theorem \ref{thm2.1}.
In view of Corollary \ref{cor2+.1} it remains to show that
$V_{\kappa}(s)\geq\mathbb{V}(s)$.
Let $(\pi=(\Xi,\gamma),\sigma)\in\mathcal{A}(\Xi)\times\mathcal{T}_{[0,T]}$
be a perfect hedge. We want to show that
$\Xi\geq\mathbb{V}(s)$, where $s$ is the initial stock position.
Let $\epsilon>0$ be such that $\frac{1+\epsilon}{1-\epsilon}<1+\kappa$.
Since the interest rate process ${\{r(t)\}}_{t=0}^T$
is bounded there exists $\hat T\in [0,T]$ such that
\begin{equation}\label{4.1}
\exp\left(\int_{0}^{\hat T} r(u)du\right)<1+\epsilon \ \ P \ \ \mbox{a.s.}
\end{equation}
From Corollary \ref{cor5.1} we obtain that there exists an
absolutely continuous $\epsilon$ price consistent
system $(Q,\hat{S})$ which satisfies (\ref{5.10}).
From (\ref{3.20}) we get
\begin{equation}\label{4.9+}
E_Q(\tilde{S}_i(\tau_2)|\mathcal{F}_{\tau_1})\geq
\frac{1}{1+\epsilon}E_Q(\hat{S}_i(\tau_2)|\mathcal{F}_{\tau_1})=\frac{1}{1+\epsilon}\hat{S}_i(\tau_1)\geq
(1-\kappa)\tilde{S}_i(\tau_1).
\end{equation}
Similarly
\begin{equation}\label{4.9++}
E_Q(\tilde{S}_i(\tau_2)|\mathcal{F}_{\tau_1})\leq
(1+\epsilon)E_Q(\hat{S}_i(\tau_2)|\mathcal{F}_{\tau_1})=(1+\epsilon)\hat{S}_i(\tau_1)\leq
(1+\kappa)\tilde{S}_i(\tau_1).
\end{equation}
Next, for any $k\in\mathbb{N}$ define the stopping time
\begin{equation}\label{4.10}
\tau_k=\sigma\wedge \hat{T}\wedge\inf\left\{t|\sum_{i=1}^d \tilde{S}_i(t)+\sum_{i=1}^d\int_{[0,t]} |d\gamma^{(n)}_i|\geq k\right\}.
\end{equation}
For any $m\in\mathbb{N}$ consider the partition $b_{m,l}=lT/m$, $l=0,1....,m$.
From (\ref{4.9+}) and the dominated convergence theorem we obtain
\begin{eqnarray}\label{4.12+}
&E_Q\int_{[0,\tau_{k}]}\left\langle \tilde{S}(u),d\gamma_{-}(u)\right\rangle=\\
&\lim_{m\rightarrow\infty}E_Q\left(\sum_{l=0}^{m-1} \left\langle \tilde{S}(\tau_{k}\wedge b_{m,l+1}),
\left(\gamma(\tau_{k}\wedge b_{m,l+1})-\gamma(\tau_{k}\wedge b_{m,l})\right)\right\rangle\right)\nonumber\\
&\leq\frac{1}{1-\kappa}\lim_{m\rightarrow\infty}
E_Q\left(\sum_{l=0}^{m-1}\left\langle\tilde{S}_{\tau_k}, \left(\gamma(\tau_{k}\wedge b_{m,l+1})-
\gamma(\tau_{k}\wedge b_{m,l})\right)\right\rangle\right)\nonumber\\
&=\frac{1}{1-\kappa}E_Q\left(\left\langle \tilde{S}(\tau_{k}),\int_{[0,\tau_{k}]}d\gamma_{-}(u)\right\rangle\right), \ \ k\in\mathbb{N}.\nonumber
\end{eqnarray}
In a similar way we obtain
\begin{equation}\label{4.12++}
E_Q\int_{[0,\tau_{k}]}\left\langle\tilde{S}(u),d\gamma_{+}(u)\right\rangle\geq\frac{1}{1+\kappa}
E_Q\left(\left\langle\tilde{S}(\tau_{k}),\int_{[0,\tau_{k}]}d\gamma_{+}(u)\right\rangle\right).
\end{equation}
From (\ref{4.12+})--(\ref{4.12++}) it follows that
\begin{eqnarray}\label{4.13}
&E_Q\bigg(\Xi+\left\langle \gamma(\tau_{k}), \tilde{S}(\tau_{k})\right\rangle-\left\langle \gamma(0),s\right\rangle-(1+\kappa)\\
&\times\int_{[0,\tau_{k}]}\left\langle \tilde{S}(u),d\gamma_{+}(u)\right\rangle+(1-\kappa)
\int_{[0,\tau_{k}]}\left\langle \tilde{S}(u),d\gamma_{-}(u)\right\rangle\bigg)\leq\nonumber\\
&\Xi+E_Q\left\langle \gamma(0), \tilde{S}(\tau_{k})-s\right\rangle\leq \Xi+\epsilon \sum_{i=1}^d s_i|\gamma_i(0)|\nonumber
\end{eqnarray}
Since $(\pi,\sigma)$ is a perfect hedge, the term in the brackets in formula (\ref{4.13}) is non--negative.
Thus, from (\ref{4.13}), the Fatou's lemma and the fact that $\tau_k\uparrow \sigma\wedge \hat{T}$ as $k\rightarrow\infty$,
we get
\begin{eqnarray}\label{4.15}
&\Xi+\epsilon \sum_{i=1}^d s_i|\gamma_i(0)|\geq\\
&E_Q\bigg(\Xi+\left\langle \gamma(\sigma\wedge \hat{T}), \tilde{S}(\sigma\wedge \hat{T})\right\rangle-\left\langle \gamma(0),s\right\rangle-(1+\kappa)\nonumber\\
&\times\int_{[0,\sigma\wedge \hat{T}]}\left\langle \tilde{S}(u),d\gamma_{+}(u)\right\rangle+(1-\kappa)
\int_{[0,\sigma\wedge \hat{T}]}\left\langle \tilde{S}(u),d\gamma_{-}(u)\right\rangle\bigg)\geq\nonumber\\
&E_Q\left(H(\sigma,\hat{T})\right),\nonumber
\end{eqnarray}
where the last inequality follows from the definition of a perfect hedge.
From (\ref{3.20}), (\ref{4.1}) and the convexity of $F$ we get
\begin{eqnarray*}
&E_Q \left(H(\sigma,\hat{T})\right)\geq \frac{1}{1+\epsilon}
E_Q \left(F(\tilde{S}(\sigma\wedge\hat{T}))+\Delta\mathbb{I}_{\sigma<\hat{T}}-\epsilon F(0)\right)\geq\\
&\frac{1}{1+\epsilon}
E_Q \left(F(\hat{S}(\sigma\wedge\hat{T}))+\Delta\mathbb{I}_{\sigma<\hat{T}}-\epsilon \left(F(0)+\tilde{L}\sum_{i=1}^d s_i\right )\right).
\end{eqnarray*}
Notice that in the last inequality we used the Lipschitz continuity
of $F$ and the fact that $\hat{S}$ is a $Q$ martingale.
This together with (\ref{5.10}) and (\ref{4.15}) yields
$$\Xi\geq \frac{1}{1+\epsilon}
\left(\mathbb{V}(s)-\epsilon \left(1+F(0)+\tilde{L}\sum_{i=1}^d s_i\right)\right)-\epsilon\sum_{i=1}^d s_i|\gamma_i(0)|$$
and by letting $\epsilon\downarrow 0$
we obtain $\Xi\geq\mathbb{V}(s)$,
as required.\qed

\section*{Acknowledgements}
I would like to thank Paolo Guasoni
for insightful discussions.

\end{document}